\begin{document}

\title{Complete Weight Enumerators of a Family of Three-Weight Linear Codes
}

\titlerunning{The CWE of Three-Weight Linear Codes}        

\author{Shudi Yang      \and
        Zheng-An Yao
        }


\institute{S.D. Yang \at
              Department of Mathematics,
Sun Yat-sen University, Guangzhou 510275 and School of Mathematical
Sciences, Qufu Normal University, Shandong 273165, P.R.China \\
              Tel.: +86-15602338023\\
              \email{yangshd3@mail2.sysu.edu.cn}           
              \and
           Z.-A. Yao \at
               Department of Mathematics,
Sun Yat-sen University, Guangzhou 510275, P.R. China\\
              \email{mcsyao@mail.sysu.edu.cn}  }
            \date{Received: date / Accepted: date}

\maketitle

\begin{abstract}
Linear codes have been an interesting topic in both theory and
practice for many years. In this paper, for an odd prime $p$, we present the explicit complete weight enumerator of a family of $p$-ary linear codes constructed with defining set. The weight enumerator is an immediate result of the complete weight
enumerator,
which shows that the codes proposed in this paper are three-weight linear
codes. Additionally, all nonzero codewords are minimal and thus they are suitable for
secret sharing.

\keywords{Linear code \and Complete weight enumerator \and Gaussian period \and Gauss sum
\\
 }
\subclass{94B15 \and 11T71 }
\end{abstract}


\section{Introduction}\label{sec:intro}

Throughout this paper, let $p$ be an odd prime and $r=p^m$ for a positive integer $m\geq2$. Denote by $\mathbb{F}_r$ a
finite field with $r$ elements. An $[n, \kappa, \delta]$ linear code
$C$ over $\mathbb{F}_p$ is a $\kappa$-dimensional subspace of
$\mathbb{F}_p^n$ with minimum distance $\delta$~\cite{ding2014differencesets,macwilliams1977theory}.

Let $A_i$ denote the number of codewords with Hamming weight
$i$ in a linear code $C$ of length $n$. The (ordinary) weight enumerator of  $C$  is defined by
$$A_0+A_1z+A_2z^2+\cdots+A_nz^n,$$
where $A_0=1$. The sequence $(A_0,A_1,A_2,\cdots,A_n)$ is called the (ordinary) weight
distribution of the code $C$.

The complete weight enumerator of a code $C$ over $\mathbb{F}_p$ enumerates the codewords according to the number of symbols of each kind contained in each codeword. Denote
elements of the field
by $\mathbb{F}_p=\{w_0,w_1,\cdots,w_{p-1}\}$, where $w_0=0$.
Also let $\mathbb{F}_p^*$ denote $\mathbb{F}_p\backslash\{0\}$.
For a codeword $\mathsf{c}=(c_0,c_1,\cdots,c_{n-1})\in \mathbb{F}_p^n$, let $w[\mathsf{c}]$ be the
complete weight enumerator of $\mathsf{c}$, which is defined as
$$w[\mathsf{c}]=w_0^{k_0}w_1^{k_1}\cdots w_{p-1}^{k_{p-1}},$$
where $k_j$ is the number of components of $\mathsf{c}$ equal to $w_j$, $\sum_{j=0}^{p-1}k_j=n$.
The complete weight enumerator of the code $C$ is then
$$\mathrm{CWE}(C)=\sum_{\mathsf{c}\in C}w[\mathsf{c}].$$

The weight distribution of a linear code has attracted a lot of interests for many years and we
refer the reader to~\cite{ding2011,ding2013hamming,dinh2015recent,feng2008weight,feng2012cyclic,li2014weight,luo2008weight,sharma2012weight,vega2012weight,wang2012weight,yu2014weight,yuan2006weight,zheng2015weight,zhou2014class} and references therein for an overview of the related researches. It is not difficult to see that the complete weight enumerators are just the (ordinary) weight
enumerators for binary linear codes. While for nonbinary linear codes, the weight enumerators can be obtained from their complete weight enumerators.

 The information of the complete weight enumerator of a linear code is of vital use both
in theory and in practice. For instance, Blake and Kith investigated the complete weight enumerator of Reed-Solomon codes and showed that they could be helpful in soft decision decoding~\cite{Blake1991,kith1989complete}. In~\cite{helleseth2006}, the study of the monomial and quadratic bent functions was related to the complete weight enumerators of linear codes. It was illustrated by Ding $et~al.$~\cite{ding2007generic,Ding2005auth} that complete weight enumerators can be applied to calculate the deception probabilities of certain authentication codes. In~\cite{chu2006constantco,ding2008optimal,ding2006construction}, the authors studied the complete weight enumerators of some constant
composition codes and presented some families of optimal constant composition codes.

However, it is extremely difficult to evaluate the complete
weight enumerators of linear codes in general and there is little information on this topic in literature besides the above mentioned~\cite{Blake1991,chu2006constantco,ding2008optimal,ding2006construction,kith1989complete}.
Kuzmin and Nechaev investigated the
generalized Kerdock code and related linear codes over Galois rings and determined their complete weight enumerators in~\cite{kuzmin1999complete} and~\cite{kuzmin2001complete}. Very recently, the authors in~\cite{li2015complete} obtained the complete weight enumerators of some cyclic codes by using Gauss sums. Li $et~al.$~\cite{LiYang2015cwe} treated some linear codes by using exponential sums and Galois theory. In this paper, we shall determine the complete weight enumerators of a class of linear codes over finite fields.

Set $\bar{D}=\{d_1,d_2,\cdots,d_n\}\subseteq \mathbb{F}_{r}$. Denote by $\mathrm{Tr}$ the trace function from $\mathbb{F}_{r}$ to $\mathbb{F}_{p}$. A linear code associated with $\bar{D}$ is defined by
\begin{equation*}\label{def:CD'}
    C_{\bar{D}}=\{(\mathrm{Tr}(ad_1),\mathrm{Tr}(ad_2),\cdots,\mathrm{Tr}(ad_n)):
       a\in \mathbb{F}_{r}\}.
\end{equation*}
Then $\bar{D}$ is called the defining set of this code $C_{\bar{D}}$~(see~\cite{ding2015twodesign,dingkelan2014binary,ding2015twothree} for details).

It should be noted that the authors in~\cite{ding2015twodesign,dingkelan2014binary} and~\cite{ding2015twothree} gave the definitions of the code $C_{\bar{D}}$ and the defining set $\bar{D}$. The authors in~\cite{dingkelan2014binary} established binary linear codes $C_{\bar{D}}$ with three weights. In~\cite{ding2015twodesign}, Ding proposed the general construction of the linear codes and determined their weights especially for three specific codes. The authors in~\cite{ding2015twothree} presented the defining set $\bar{D}=\{x\in \mathbb{F}_{r}^*:\mathrm{Tr}(x^{2})=0\}$ to construct a class of linear codes $C_{\bar{D}}$ with
two or three
 weights and investigated their applications in secret sharing, and furthermore, their complete weight enumerators were established by Yang and Yao~\cite{yang2015linear}.

In this paper, let $r=p^m$. The defining set $D$ is given by
\begin{equation}\label{def:D}
     D=\{x\in \mathbb{F}_{r}^*:\mathrm{Tr}(x)=0\},
\end{equation}
and let \begin{equation}\label{def:CD}
    C_{D}=\{(\mathrm{Tr}(ax^2))_{x\in D}:
       a\in \mathbb{F}_{r}\}.
\end{equation}

For this kind of linear codes, we will study their complete weight enumerators, and their weight enumerators as well. We should mention that the main idea of this paper indeed comes from~\cite{ding2015twodesign,ding2015twothree}. As will be proved later,
they are linear codes with three weights. In addition, each nonzero codeword of $C_D$ constructed in this paper is minimal if $m\geq4$, which indicates that the linear codes can be of use in secret sharing schemes~\cite{carlet2005linear} with nice access structures. The reader is referred to~\cite{ding2015twothree} for more information about this application.

The main results of this paper are given below.

\begin{theorem}\label{thcwe:CD}
Let $N_a(\rho)=\#\{x\in \mathbb{F}_{r}^*:\mathrm{Tr}(x)=0~ and~ \mathrm{Tr}(ax^2)=\rho \}$ with $\rho\in \mathbb{F}_{p}$ and $a\in \mathbb{F}_{r}$. Then the code
$C_D$ of \eqref{def:CD} has parameters
$[ p^{m-1}-1,m]$ and its complete weight enumerator is
given in Table~\ref{CWE:m even simplified}
if $m$ is even and Table~\ref{CWE:m odd} if $m$ is odd.

\begin{table}[htbp]
\tabcolsep 2mm \caption{{Complete weight enumerator of the
code $C_D$ if $m$ is even}}\label{CWE:m even simplified}
\begin{center}\begin{tabular}{lll}
\hline\noalign{\smallskip}
$N_a(0)$ & $N_a(\rho) (\rho\neq0)$ & Frequency  \\
\noalign{\smallskip}\hline\noalign{\smallskip}
$p^{m-2}-1$     & $p^{m-2}+\bar{\eta}(\rho)p^{\frac{m-2}{2}}$   &
$\frac{p^m-p^{m-1}}{2}$ \\
$p^{m-2}-1$     & $p^{m-2}-\bar{\eta}(\rho)p^{\frac{m-2}{2}}$   &
$\frac{p^m-p^{m-1}}{2}$ \\
$p^{m-2}-1-(p-1)p^{\frac{m-2}{2}}$     & $p^{m-2}+p^{\frac{m-2}{2}}$   & $\frac{1}{2}(p^{\frac{m}{2}}+1)(p^{\frac{m-2}{2}}-1)$ \\
$p^{m-2}-1+(p-1)p^{\frac{m-2}{2}}$     & $p^{m-2}-p^{\frac{m-2}{2}}$   & $\frac{1}{2}(p^{\frac{m}{2}}-1)(p^{\frac{m-2}{2}}+1)$ \\
$p^{m-1}-1$         &  0           & 1   \\
\noalign{\smallskip}\hline
\end{tabular}
\end{center}
where {$\bar{\eta}$ denotes the quadratic character of $\mathbb{F}_{p}$}.
\end{table}


\begin{table}[htbp]
\tabcolsep 2mm \caption{Complete weight enumerator of the
code $C_D$ if $m$ is odd}\label{CWE:m odd}
\begin{center}\begin{tabular}{lll}
\hline\noalign{\smallskip}
$N_a(0)$ & $N_a(\rho) (\rho\neq0)$ & Frequency  \\
\noalign{\smallskip}\hline\noalign{\smallskip}
$p^{m-2}-1$     & $p^{m-2}+\bar{\eta}(\rho)p^{\frac{m-1}{2}}$   &
$\frac{p^{m-1}-1}{2}$ \\
$p^{m-2}-1$     & $p^{m-2}-\bar{\eta}(\rho)p^{\frac{m-1}{2}}$   &
$\frac{p^{m-1}-1}{2}$ \\
$p^{m-2}+(p-1)p^{\frac{m-3}{2}}$     & $p^{m-2}-p^{\frac{m-3}{2}}$   &
$\frac{p-1}{2}\left(p^{m-1}+p^{\frac{m-1}{2}}\right)$  \\
$p^{m-2}-(p-1)p^{\frac{m-3}{2}}$     & $p^{m-2}+p^{\frac{m-3}{2}}$   &
$\frac{p-1}{2}\left(p^{m-1}-p^{\frac{m-1}{2}}\right)$  \\
$p^{m-1}-1$         &  0           & 1   \\
\noalign{\smallskip}\hline
\end{tabular}
\end{center}
where {$\bar{\eta}$ denotes the quadratic character of $\mathbb{F}_{p}$.}
\end{table}

\end{theorem}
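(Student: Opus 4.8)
Everything reduces to the counting functions $N_a(\rho)$. Indeed the length is $|D|=\#\{x\in\mathbb{F}_r^*:\mathrm{Tr}(x)=0\}=p^{m-1}-1$; the codeword $c_a:=(\mathrm{Tr}(ax^2))_{x\in D}$ contributes the monomial $\prod_{\rho\in\mathbb{F}_p}w_\rho^{\,N_a(\rho)}$ to $\mathrm{CWE}(C_D)$, so the tables are nothing but a bookkeeping of how many $a\in\mathbb{F}_r$ produce each admissible vector $(N_a(\rho))_{\rho\in\mathbb{F}_p}$; and the dimension equals $m$ once we know $N_a(0)<|D|$ for every $a\neq0$, since then the $\mathbb{F}_p$-linear map $a\mapsto c_a$ is injective. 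Hence the plan is: (i) write $N_a(\rho)$ as a double character sum; (ii) evaluate it in closed form with quadratic Gauss sums, which produces the first two columns of Tables~\ref{CWE:m even simplified} and~\ref{CWE:m odd}; (iii) count how many $a$ realise each value, which produces the Frequency column.

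\smallskip
\noindent\textbf{Steps (i)--(ii): the value of $N_a(\rho)$.}
Let $\chi$ be the canonical additive character of $\mathbb{F}_p$ and set $\chi'=\chi\circ\mathrm{Tr}$, the canonical additive character of $\mathbb{F}_r$. Orthogonality gives
\[
N_a(\rho)=\frac1{p^2}\sum_{y,z\in\mathbb{F}_p}\chi(-z\rho)\sum_{x\in\mathbb{F}_r^*}\chi'\!\bigl(yx+zax^2\bigr).
\]
The term $a=0$ is computed directly and yields the last row (frequency $1$) of each table. For $a\neq0$ the $z=0$ part of the double sum contributes the constant $p^{m-2}-1/p$, and for $z\neq0$ one completes the square:
\[
\sum_{x\in\mathbb{F}_r}\chi'\!\bigl(zax^2+yx\bigr)=\eta'(za)\,G'\,\chi'\!\Bigl(-\tfrac{y^2}{4za}\Bigr),
\]
where $\eta'$ is the quadratic character of $\mathbb{F}_r$ and $G'=\sum_{x\in\mathbb{F}_r^*}\eta'(x)\chi'(x)$; one removes the $x=0$ term and then sums over $y$. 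Two reductions drive the computation: since $y,z\in\mathbb{F}_p$ we have $\chi'(-y^2/(4za))=\chi(-\tfrac{\mathrm{Tr}(a^{-1})}{4z}y^2)$, so the $y$-sum is a quadratic Gauss sum over $\mathbb{F}_p$ when $\mathrm{Tr}(a^{-1})\neq0$ and equals $p$ when $\mathrm{Tr}(a^{-1})=0$; and for $z\in\mathbb{F}_p^*$ one has $\eta'(z)=\bar\eta(z)^m$, hence $\eta'(za)=\bar\eta(z)^m\eta'(a)$ --- which is exactly where the parity of $m$ enters. Collecting terms and inserting the Davenport--Hasse relation $G'=(-1)^{m-1}G^m$ together with $G^2=(-1)^{(p-1)/2}p$, where $G=\sum_{x\in\mathbb{F}_p^*}\bar\eta(x)\chi(x)$, one finds that $N_a(\rho)$ depends only on the parity of $m$, on whether $\mathrm{Tr}(a^{-1})=0$, on $\eta'(a)\in\{\pm1\}$, and (when $\mathrm{Tr}(a^{-1})\neq0$) on $\bar\eta(\mathrm{Tr}(a^{-1}))$; the four resulting cases for each parity are the first two columns of the two tables.

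\smallskip
\noindent\textbf{Step (iii): the frequencies.}
Since $a\mapsto a^{-1}$ permutes $\mathbb{F}_r^*$, immediately $\#\{a\in\mathbb{F}_r^*:\mathrm{Tr}(a^{-1})=0\}=p^{m-1}-1$ and $\#\{a\in\mathbb{F}_r^*:\mathrm{Tr}(a^{-1})\neq0\}=(p-1)p^{m-1}$. To split these by the values of $\eta'(a)$ and $\bar\eta(\mathrm{Tr}(a^{-1}))$ I would, after the substitution $b=a^{-1}$, evaluate
\[
\sum_{\substack{b\in\mathbb{F}_r^*\\ \mathrm{Tr}(b)=0}}\eta'(b)
\qquad\text{and}\qquad
\sum_{b\in\mathbb{F}_r^*}\bar\eta(\mathrm{Tr}(b))\,\eta'(b).
\]
By orthogonality and the identity $\bar\eta(u)=G^{-1}\sum_{c\in\mathbb{F}_p}\bar\eta(c)\chi(cu)$, each of these collapses to $G'$ (resp.\ $G'/G$) times an elementary sum $\sum_{c\in\mathbb{F}_p^*}\bar\eta(c)^{m+\varepsilon}$ with $\varepsilon\in\{0,1\}$; for $m$ even the first such elementary sum equals $p-1$ while the second vanishes, and for $m$ odd it is the other way around --- which is exactly why the two tables have different frequency patterns. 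Feeding $G'$ back in through Davenport--Hasse and simplifying gives the frequencies, e.g.\ $\tfrac12(p^{m/2}+1)(p^{(m-2)/2}-1)$ and $\tfrac12(p^{m/2}-1)(p^{(m-2)/2}+1)$ for $m$ even, and $\tfrac{p-1}{2}\bigl(p^{m-1}\pm p^{(m-1)/2}\bigr)$ for $m$ odd. A running consistency check is that each Frequency column must sum to $r=p^m$ and that $\sum_{\rho\in\mathbb{F}_p}N_a(\rho)=p^{m-1}-1$ in every row.

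\smallskip
\noindent\textbf{Where the difficulty lies.}
There is no conceptual obstacle --- every sum that appears is a quadratic Gauss sum --- but the computation is acutely sign-sensitive: one must carry the factor $(-1)^{(p-1)/2}$ and the powers $\bar\eta(-1)^{m/2}$, $\bar\eta(-1)^{(m\pm1)/2}$ coming out of $G^2$, of $G'$, and of the \emph{two} nested Gauss sums, and then verify, row by row, that the sign of the coefficient of $\bar\eta(\rho)$ (or of $p^{(m-2)/2}$, $p^{(m-3)/2}$) in $N_a(\rho)$ is consistent with the sign occurring in $N_a(0)$. The delicate point is the pair of rows with $\mathrm{Tr}(a^{-1})\neq0$ for odd $m$: there the two sign-classes of $a$ are \emph{not} equinumerous, the imbalance being exactly $(p-1)p^{(m-1)/2}$ --- the contribution of the non-vanishing elementary sum above --- whereas for the remaining rows a one-line symmetry argument, namely replacing $a$ by $\lambda a$ for a fixed non-square $\lambda\in\mathbb{F}_p^*$ (which flips exactly one of the two relevant signs), already forces an even split. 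Once all signs are pinned down, Tables~\ref{CWE:m even simplified} and~\ref{CWE:m odd} follow; the ordinary weight enumerator --- and with it the three-weight property --- is then immediate from $\mathrm{wt}(c_a)=|D|-N_a(0)$.
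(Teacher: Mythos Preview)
Your proposal is correct and follows essentially the same route as the paper: write $N_a(\rho)$ via orthogonality, evaluate the inner sum by completing the square and quadratic Gauss sums (the paper's Lemmas~\ref{lem:sum3} and~\ref{lem:Na rho}), case-split on the parity of $m$ and on whether $\mathrm{Tr}(a^{-1})=0$, and then count how many $a$ fall into each class. The only noteworthy difference is in Step~(iii): the paper obtains the frequencies by invoking Gaussian periods (Lemma~\ref{lemN=2}) for the $\mathrm{Tr}(a^{-1})=0$ classes and a separate known count $t_c=\#\{a:\mathrm{Tr}(a^2)=c\}$ (Lemma~\ref{lem:ti}) for the $\mathrm{Tr}(a^{-1})\neq0$ classes, whereas you compute the two imbalance sums $\sum_{\mathrm{Tr}(b)=0}\eta'(b)$ and $\sum_b\bar\eta(\mathrm{Tr}(b))\eta'(b)$ directly and reduce each to $G'$ times $\sum_{c\in\mathbb{F}_p^*}\bar\eta(c)^{m+\varepsilon}$ via Davenport--Hasse; this is a little more uniform and avoids quoting the $t_c$ lemma, but it is the same computation in disguise.
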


\begin{theorem}\label{wt:CD} The weight distribution of
the code $C_D$ of \eqref{def:CD} is
{given in} Table~\ref{wt:m even simplified}
if $m$ is even and Table~\ref{wt:m odd} if $m$ is odd.

\begin{table}[htbp]
\tabcolsep 2mm \caption{{The weight distribution of $C_D$ if $m$ is even}}\label{wt:m even simplified}
\begin{center}\begin{tabular}{ll}
\hline\noalign{\smallskip}
  Weight $i$   & Frequency  $A_i$            \\
    \noalign{\smallskip}\hline\noalign{\smallskip}
   $(p-1)p^{m-2}$ & $p^m-p^{m-1}$     \\
   $(p-1)(p^{m-2}+p^{\frac{m-2}{2}})$   &$\frac{1}{2}(p^{\frac{m}{2}}+1)(p^{\frac{m-2}{2}}-1)$   \\
   $(p-1)(p^{m-2}-p^{\frac{m-2}{2}})$   &$\frac{1}{2}(p^{\frac{m}{2}}-1)(p^{\frac{m-2}{2}}+1)$   \\
   0         &  1        \\
   \noalign{\smallskip}\hline
  \end{tabular}
\end{center}

\end{table}


\begin{table}[htbp]
\tabcolsep 2mm \caption{The weight distribution of $C_D$ if $m$ is odd}\label{wt:m odd}
\begin{center}\begin{tabular}{ll}
\hline\noalign{\smallskip}
  Weight $i$   & Frequency  $A_i$            \\
    \noalign{\smallskip}\hline\noalign{\smallskip}
  $(p-1)p^{m-2}$ & $p^{m-1}-1$     \\
   $(p-1)(p^{m-2}-p^{\frac{m-3}{2}})$   &
   $\frac{p-1}{2}\left(p^{m-1}+p^{\frac{m-1}{2}}\right) $   \\
   $(p-1)(p^{m-2}+p^{\frac{m-3}{2}})$   &
   $\frac{p-1}{2}\left(p^{m-1}-p^{\frac{m-1}{2}}\right)$   \\
   0         &  1        \\
   \noalign{\smallskip}\hline
  \end{tabular}
\end{center}
\end{table}

\end{theorem}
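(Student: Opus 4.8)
The plan is to deduce Theorem~\ref{wt:CD} directly from Theorem~\ref{thcwe:CD}, using the elementary fact that for any codeword the Hamming weight equals its length minus the number of zero coordinates it contains. First I would note that, by \eqref{def:D}, $D$ consists of the nonzero solutions of the nontrivial $\mathbb{F}_{p}$-linear equation $\mathrm{Tr}(x)=0$; hence $\{x\in\mathbb{F}_{r}:\mathrm{Tr}(x)=0\}$ is a hyperplane of $\mathbb{F}_{r}$ of size $p^{m-1}$ and the length of $C_{D}$ is $n=\#D=p^{m-1}-1$. Next, for $a\in\mathbb{F}_{r}$ write $\mathsf{c}_{a}=(\mathrm{Tr}(ax^{2}))_{x\in D}$ for the codeword of \eqref{def:CD}. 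In the notation of Theorem~\ref{thcwe:CD}, the number of coordinates of $\mathsf{c}_{a}$ equal to $0$ is precisely $N_{a}(0)$, so
\[
\mathrm{wt}(\mathsf{c}_{a})=n-N_{a}(0)=p^{m-1}-1-N_{a}(0).
\]

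It then remains to feed this identity through the $N_{a}(0)$-column of the complete weight enumerator. Each line of Table~\ref{CWE:m even simplified} (for $m$ even) and of Table~\ref{CWE:m odd} (for $m$ odd) records one value of $N_{a}(0)$ together with the number of $a\in\mathbb{F}_{r}$ realising it; substituting that value into the displayed identity produces the corresponding weight $p^{m-1}-1-N_{a}(0)$, and the two rows with a common value of $N_{a}(0)$ are amalgamated. Since $a\mapsto\mathsf{c}_{a}$ is a bijection of $\mathbb{F}_{r}$ onto $C_{D}$ — equivalently $\dim C_{D}=m$, which is part of the assertion of Theorem~\ref{thcwe:CD} — these counts are exactly the frequencies $A_{i}$. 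The substitution yields the nonzero weights $(p-1)p^{m-2}$ and $(p-1)\bigl(p^{m-2}\pm p^{\frac{m-2}{2}}\bigr)$ when $m$ is even, the nonzero weights $(p-1)p^{m-2}$ and $(p-1)\bigl(p^{m-2}\pm p^{\frac{m-3}{2}}\bigr)$ when $m$ is odd, and in both cases the zero codeword (from $a=0$); the frequencies are carried over unchanged. This is precisely the content of Table~\ref{wt:m even simplified} (resp.\ Table~\ref{wt:m odd}), and a convenient check is that the frequencies sum to $p^{m}$ in either parity.

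Granting Theorem~\ref{thcwe:CD}, there is essentially no obstacle: the last step is pure bookkeeping with the tables, and the only point deserving a word — that the weight-$0$ class has size $1$, i.e.\ $a=0$ is the unique $a$ with $\mathsf{c}_{a}=0$ — is already built into Theorem~\ref{thcwe:CD}. All the genuine difficulty sits inside that theorem. Were one to prove Theorem~\ref{wt:CD} from scratch, one would have to reproduce essentially all of it: expand $N_{a}(0)$ by additive-character orthogonality into a double character sum over $\mathbb{F}_{r}$, collapse the inner quadratic Weil sum $\sum_{x\in\mathbb{F}_{r}}\chi(\beta x^{2}+\gamma x)$ in terms of the quadratic Gauss sum $g=\sum_{x\in\mathbb{F}_{r}^{*}}\eta(x)\chi(x)$ and its classical evaluation $g=(-1)^{m-1}\bigl(\sqrt{p^{\ast}}\bigr)^{m}$ with $p^{\ast}=(-1)^{(p-1)/2}p$, and then sort the finitely many resulting values of $N_{a}(0)$ according to a secondary quadratic invariant of $a$ (a quantity of Gaussian-period type), whose distribution over $\mathbb{F}_{r}^{*}$ depends on the parity of $m$ and on the quadratic character $\bar{\eta}$ of $\mathbb{F}_{p}$. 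That distribution count is the crux of the whole argument.
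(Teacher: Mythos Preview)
Your proposal is correct and matches the paper's own approach: the paper states explicitly that ``Theorem~\ref{wt:CD} follows immediately from Theorem~\ref{thcwe:CD}'' and gives no separate proof, so deducing the weight distribution by reading off $n-N_{a}(0)$ from the $N_{a}(0)$-column of the complete weight enumerator tables is exactly what is intended. Your final paragraph sketching the from-scratch character-sum argument is an accurate summary of the machinery behind Theorem~\ref{thcwe:CD}, but for Theorem~\ref{wt:CD} itself only the bookkeeping step is required.
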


\begin{example}
(i) Let $(p,m)=(3,4)$. Then by Theorems~\ref{thcwe:CD} and~\ref{wt:CD}, the code $C_D$ has parameters $[26,4,12]$ with complete
weight enumerator
$$w_0^{26}+16w_0^{14}w_1^{6}w_2^{6}+27w_0^{8}w_1^{6}w_2^{12}+27w_0^{8}w_1^{12}w_2^{6}+10w_0^{2}w_1^{12}w_2^{12},$$
and weight enumerator
$$1+16z^{12}+54z^{18}+10z^{24}.$$
These results coincide with numerical computation by Magma.

(ii) Let $(p,m)=(5,3)$. Then by Theorems~\ref{thcwe:CD} and~\ref{wt:CD}, the code $C_D$ has parameters $[24,3,16]$ with complete
weight enumerator
$$w_0^{24}+60w_0^{8}w_1^{4}w_2^{4}w_3^{4}w_4^{4}+12w_0^{4}w_1^{10}w_4^{10}
+12w_0^{4}w_2^{10}w_3^{10}+40w_1^{6}w_2^{6}w_3^{6}w_4^{6}.$$
and weight enumerator
$$1+60z^{16}+24z^{20}+40z^{24},$$
These results are confirmed by Magma.

\end{example}

The remainder of this paper is organized as follows. Section
\ref{sec:mathtool}
{recalls} some definitions and results on
Gaussian periods and Gauss sums over finite fields. Section~\ref{sec:proof}
is devoted to the proof of Theorem~\ref{thcwe:CD}. Section~\ref{sec:minimal} shows that every nonzero
codeword of the code is minimal. Section~\ref{sec:conclusion} concludes this paper.

\section{Mathematical foundations}\label{sec:mathtool}

We start with cyclotomic classes and Gaussian periods over finite fields. Recall that $r=p^m$. Let $\alpha$ be a fixed primitive element of $\mathbb{F}_r$ and
{$r-1=sN$, where $s$, $N$ are two positive integers with $s>1$ and $N>1$}.
Define $C_i^{(N,r)}=\alpha
^i\langle\alpha^N\rangle$ for $i=0,1,\cdots,N-1$, where
$\langle\alpha^N\rangle$ denotes the subgroup of $\mathbb{F}_r^*$
generated by $ \alpha^N$. The cosets $C_i^{(N,r)}$ are called the \emph{cyclotomic classes} of order $N$
in $\mathbb{F}_r$. Obviously, $\#C_i^{(N,r)}=\frac{r-1}{N}$.

 Set
$\zeta_p=\exp\left(\frac{2\pi\sqrt{-1}}{p}\right)$. The Gaussian
periods of order $N$ are defined by
 $$\eta_{i}^{(N,r)}=\sum_{x\in C_i^{(N,r)}}\chi_1(x).$$
 Here $\chi_1(x)=\zeta_p^{\text{Tr}(x)}$ is the
 canonical additive character of $\mathbb{F}_r$, where $\text{Tr}$ is the trace function from $\mathbb{F}_r$
to $\mathbb{F}_p$ defined by
\begin{equation*}
\mathrm{Tr}(x)=\sum^{m-1}_{i=0}x^{p^i},~~x\in \mathbb{F}_{r}.
\end{equation*}

The orthogonal property of {canonical additive character},
which can be easily checked, is {given by}
\begin{equation*}\label{eq:ortho}
\sum_{x\in \mathbb{F}_r}\chi_1(ax)
=\left\{\begin{array}{lll}
r                   &&~~\mbox{if}~~a=0, \\
0                   &&~~\mbox{if}~~a\in\mathbb{F}_r^*.\\
                                            \end{array}
                                       \right.
\end{equation*}

Some results on Gaussian periods are given below~\cite{myerson1981period}.

\begin{lemma}\emph{\cite{myerson1981period}}\label{lemN=2}
Let $r=p^m$. When $N=2$, the Gaussian periods are given by
\begin{equation*}
\eta_0^{(2,r)}=\left\{\begin{array}{lll} \frac{-1+(-1)^{ m-1}\sqrt{r}}{2} & & ~~if~~ p\equiv1\pmod 4,\\
\frac{-1+(-1)^{ m-1}{(\sqrt{-1})}^{ m}\sqrt{r}}{2} & & ~~if ~~p\equiv3\pmod 4,\\
\end{array}
 \right.
\end{equation*}
and $\eta_1^{(2,r)}=-1-\eta_0^{(2,r)}$.
\end{lemma}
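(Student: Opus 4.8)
The plan is to prove Lemma~\ref{lemN=2} by reducing the quadratic-residue Gaussian period $\eta_0^{(2,r)}$ to the quadratic Gauss sum over $\mathbb{F}_r$, whose exact value is classically known. First I would observe that for $N=2$ the cyclotomic class $C_0^{(2,r)}$ is precisely the set of nonzero squares in $\mathbb{F}_r^*$, so that $\eta_0^{(2,r)}=\sum_{x\in C_0^{(2,r)}}\chi_1(x)$ counts the canonical additive character summed over the squares. Writing $\phi$ for the quadratic (multiplicative) character of $\mathbb{F}_r$, every nonzero $x$ satisfies $1+\phi(x)=2$ when $x$ is a square and $0$ otherwise, so $\sum_{x\in\mathbb{F}_r^*}\tfrac12\bigl(1+\phi(x)\bigr)\chi_1(x)=\eta_0^{(2,r)}$. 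Splitting the sum gives
\begin{equation*}
\eta_0^{(2,r)}=\frac{1}{2}\sum_{x\in\mathbb{F}_r^*}\chi_1(x)+\frac{1}{2}\sum_{x\in\mathbb{F}_r^*}\phi(x)\chi_1(x)=\frac{-1}{2}+\frac{1}{2}G,
\end{equation*}
where $G=\sum_{x\in\mathbb{F}_r^*}\phi(x)\chi_1(x)$ is the quadratic Gauss sum and the first term is $-1/2$ by the orthogonality relation recalled above (the full sum over $\mathbb{F}_r$ vanishes, leaving only the missing $x=0$ term $\chi_1(0)=1$).

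The key step is then to insert the classical evaluation of the quadratic Gauss sum over $\mathbb{F}_{p^m}$. I would invoke the standard result (see, e.g., Lidl--Niederreiter or any reference on Gauss sums) that for the quadratic character $\phi$ of $\mathbb{F}_{p^m}$,
\begin{equation*}
G=(-1)^{m-1}\sqrt{p^*}^{\,m},\qquad p^*=(-1)^{(p-1)/2}p,
\end{equation*}
equivalently $G=(-1)^{m-1}\sqrt{r}$ when $p\equiv1\pmod4$ and $G=(-1)^{m-1}(\sqrt{-1})^{m}\sqrt{r}$ when $p\equiv3\pmod4$, since $\sqrt{p^*}=\sqrt{p}$ in the first case and $\sqrt{p^*}=\sqrt{-1}\,\sqrt{p}$ in the second. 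Substituting this into the displayed expression for $\eta_0^{(2,r)}$ yields exactly the two cases in the statement. Finally, the value of $\eta_1^{(2,r)}$ follows immediately: since $C_0^{(2,r)}\cup C_1^{(2,r)}=\mathbb{F}_r^*$ is a disjoint union, we have $\eta_0^{(2,r)}+\eta_1^{(2,r)}=\sum_{x\in\mathbb{F}_r^*}\chi_1(x)=-1$, giving $\eta_1^{(2,r)}=-1-\eta_0^{(2,r)}$.

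The main obstacle is the sign determination in the Gauss sum evaluation, which is the genuinely nontrivial input; the ambiguity $G=\pm\sqrt{p^*}^{\,m}$ is resolved by Gauss's theorem on the sign of the quadratic Gauss sum over the prime field together with the Davenport--Hasse lifting relation, which propagates the prime-field value to $\mathbb{F}_{p^m}$ and produces the factor $(-1)^{m-1}$. Since the lemma is attributed to~\cite{myerson1981period}, I would cite that reference (or the Davenport--Hasse relation directly) for the precise sign rather than reproving Gauss's sign theorem, and simply carry out the case split on $p\bmod 4$ to match the two displayed formulas. The remainder is the routine bookkeeping of rewriting $\sqrt{p^*}^{\,m}$ in the two residue cases, which I would not belabor.
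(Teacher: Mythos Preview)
Your argument is correct and is exactly the standard derivation: express $\eta_0^{(2,r)}$ as $\tfrac12\bigl(-1+G(\eta,\chi_1)\bigr)$ via the indicator $\tfrac12(1+\eta(x))$ for squares, then plug in the explicit value of the quadratic Gauss sum and split on $p\bmod 4$. Note, however, that the paper does not actually prove this lemma at all; it is quoted as a known result from Myerson~\cite{myerson1981period} and stated without argument. So there is no ``paper's own proof'' to compare against---your proposal simply supplies the (routine) derivation that the paper omits.

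One minor remark: the Gauss sum evaluation you invoke is already recorded in the paper as Lemma~\ref{lm:gauss sum} (Theorem~5.15 of Lidl--Niederreiter), in the form $G(\eta,\chi_1)=(-1)^{m-1}(\sqrt{-1})^{\frac{(p-1)^2}{4}m}\sqrt{r}$. If you were writing this up inside the paper you could cite that lemma directly rather than appealing to Davenport--Hasse separately; the case split on $p\bmod 4$ then amounts to observing that $\tfrac{(p-1)^2}{4}\equiv 0\pmod 4$ when $p\equiv 1\pmod 4$ and $\tfrac{(p-1)^2}{4}\equiv 1\pmod 4$ when $p\equiv 3\pmod 4$, which reproduces your two displayed values of $G$.
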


For later use, we introduce Gauss sums in the following.
{Let $\eta$ be the quadratic character of $\mathbb{F}_r$~\cite{ding2015twothree}.}
The quadratic Gauss sum $G(\eta,\chi_1)$ over $\mathbb{F}_{r}$ is defined by
\begin{eqnarray*}
G(\eta,\chi_1)=\sum_{x\in\mathbb{F}_r^*}\eta(x)\chi_1(x)=\sum_{x\in\mathbb{F}_r}\eta(x)\chi_1(x),
\end{eqnarray*}
and the quadratic Gauss sum $G(\bar{\eta},\bar{\chi}_1)$ over $\mathbb{F}_{p}$ is defined by
\begin{eqnarray*}
G(\bar{\eta},\bar{\chi}_1)=\sum_{x\in\mathbb{F}_{p}^*}\bar{\eta}(x)\bar{\chi}_1(x)
                          =\sum_{x\in\mathbb{F}_{p}}\bar{\eta}(x)\bar{\chi}_1(x),
\end{eqnarray*}
where $\bar{\eta}$ and $\bar{\chi}_1$ are the quadratic character and
{canonical additive character} of $\mathbb{F}_{p}$, respectively.

The following lemmas will be needed in the sequel.
\begin{lemma}(See Theorems 5.15~\cite{lidl1983finite})\label{lm:gauss sum}
With the symbols and notation above, we have
\begin{eqnarray*}\label{eq:Gausspm}
G(\eta,\chi_1)=(-1)^{m-1}(\sqrt{-1})^{\frac{(p-1)^2}{4}m}\sqrt{r},
\end{eqnarray*}
where $r=p^m$, and
\begin{eqnarray*}\label{eq:Gaussp}
G(\bar{\eta},\bar{\chi}_1)=(\sqrt{-1})^{\frac{(p-1)^2}{4}}\sqrt{p}.
\end{eqnarray*}

\end{lemma}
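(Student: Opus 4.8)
The plan is to evaluate the two quadratic Gauss sums separately, treating the sum over the prime field $\mathbb{F}_p$ as the genuine input and obtaining the sum over $\mathbb{F}_r$ from it by a lifting argument. The absolute value is the easy part: for any nontrivial multiplicative character the standard orthogonality computation gives $G(\bar{\eta},\bar{\chi}_1)\overline{G(\bar{\eta},\bar{\chi}_1)}=p$, so $G(\bar{\eta},\bar{\chi}_1)\in\{\pm\sqrt{p},\pm\sqrt{-1}\,\sqrt{p}\}$; the whole content of the second claim is pinning down which of these four values occurs. I would establish the classical sign determination (Gauss's theorem), namely $G(\bar{\eta},\bar{\chi}_1)=\sqrt{p}$ when $p\equiv1\pmod4$ and $G(\bar{\eta},\bar{\chi}_1)=\sqrt{-1}\,\sqrt{p}$ when $p\equiv3\pmod4$, by one of the standard routes --- e.g. Schur's evaluation via the eigenvalues of the finite Fourier transform matrix, or the theta-function/analytic argument. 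One then checks that these two cases are packaged exactly by the exponent in the statement: since $(p-1)^2/4=\bigl((p-1)/2\bigr)^2$ is $\equiv0\pmod4$ when $p\equiv1\pmod4$ and $\equiv1\pmod4$ when $p\equiv3\pmod4$, we have $(\sqrt{-1})^{(p-1)^2/4}=1$ in the first case and $=\sqrt{-1}$ in the second, which reproduces $G(\bar{\eta},\bar{\chi}_1)=(\sqrt{-1})^{(p-1)^2/4}\sqrt{p}$.

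For the sum over $\mathbb{F}_r$ I would lift via the Davenport--Hasse relation. The quadratic character $\eta$ of $\mathbb{F}_r$ is the unique multiplicative character of order $2$, and the composite $\bar{\eta}\circ\mathrm{N}_{\mathbb{F}_r/\mathbb{F}_p}$ is a character of order $2$ that is nontrivial because the norm $\mathrm{N}_{\mathbb{F}_r/\mathbb{F}_p}\colon\mathbb{F}_r^*\to\mathbb{F}_p^*$ is surjective; hence $\eta=\bar{\eta}\circ\mathrm{N}_{\mathbb{F}_r/\mathbb{F}_p}$. Likewise $\chi_1=\bar{\chi}_1\circ\mathrm{Tr}$ by the very definition of the canonical additive character. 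The Davenport--Hasse theorem then yields $G(\eta,\chi_1)=(-1)^{m-1}G(\bar{\eta},\bar{\chi}_1)^m$, and substituting the base case gives $G(\eta,\chi_1)=(-1)^{m-1}(\sqrt{-1})^{(p-1)^2m/4}(\sqrt{p})^m=(-1)^{m-1}(\sqrt{-1})^{\frac{(p-1)^2}{4}m}\sqrt{r}$, as claimed.

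As an independent cross-check, and an alternative to Davenport--Hasse that stays entirely inside the present excerpt, I would read the $\mathbb{F}_r$-formula off Lemma~\ref{lemN=2}. The cyclotomic classes $C_0^{(2,r)}$ and $C_1^{(2,r)}$ are exactly the nonzero squares and the nonsquares of $\mathbb{F}_r$, so $\eta\equiv1$ on $C_0^{(2,r)}$ and $\eta\equiv-1$ on $C_1^{(2,r)}$, whence $G(\eta,\chi_1)=\eta_0^{(2,r)}-\eta_1^{(2,r)}$. Using $\eta_1^{(2,r)}=-1-\eta_0^{(2,r)}$ this collapses to $G(\eta,\chi_1)=2\eta_0^{(2,r)}+1$, and inserting the two cases of Lemma~\ref{lemN=2} gives $(-1)^{m-1}\sqrt{r}$ for $p\equiv1\pmod4$ and $(-1)^{m-1}(\sqrt{-1})^m\sqrt{r}$ for $p\equiv3\pmod4$; the same reduction of $(p-1)^2/4$ modulo $4$ used above shows these agree with the stated exponent, and the instance $m=1$ recovers the $\mathbb{F}_p$-formula.

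The single hard step is the sign determination of the prime-field quadratic Gauss sum in the first paragraph: the modulus $\sqrt{p}$ is routine, but deciding between $+\sqrt{p}$ and its three sign/phase variants is the classical difficulty that Gauss himself resolved only after considerable effort, and it is the one place where a genuinely nontrivial argument (as opposed to character bookkeeping or direct substitution) is required. Everything downstream --- the identification $\eta=\bar{\eta}\circ\mathrm{N}_{\mathbb{F}_r/\mathbb{F}_p}$, the Davenport--Hasse lift, and the simplification of the power of $\sqrt{-1}$ --- is mechanical once that phase is known.
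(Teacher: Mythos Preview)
Your outline is correct, and in fact goes well beyond what the paper does: the paper offers no proof at all for this lemma, simply quoting it from Lidl--Niederreiter (Theorem~5.15) as a black box. Your two-step approach --- the classical sign determination over $\mathbb{F}_p$ followed by the Davenport--Hasse lift to $\mathbb{F}_r$ --- is precisely the standard route to this result, and the bookkeeping with the exponent $(p-1)^2/4$ is accurate.

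One small caveat on your cross-check via Lemma~\ref{lemN=2}: calling it ``independent'' is a slight overstatement, since Myerson's evaluation of the quadratic Gaussian periods ultimately rests on the same Gauss-sum sign determination you invoke in the first paragraph. Within the paper's internal logic, however, Lemma~\ref{lemN=2} is imported as a black box, so deriving $G(\eta,\chi_1)=2\eta_0^{(2,r)}+1$ from it is a perfectly legitimate shortcut that avoids stating Davenport--Hasse explicitly. Either way, the paper itself neither needs nor supplies any of this: it treats the formula as established and moves on.
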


\begin{lemma}(See Theorem 5.33 of~\cite{lidl1983finite})\label{lm:expo sum}
{Let $r=p^m$ and $f(x)=a_2x^2+a_1x+a_0\in \mathbb{F}_{r}[x]$ with
$a_2\neq0$.}
Then
\begin{eqnarray*}\label{eq:expo sum}
\sum_{x\in
\mathbb{F}_{r}}\chi_1(f(x))=\chi_1(a_0-a_1^2(4a_2)^{-1})\eta(a_2)G(\eta,\chi_1).
\end{eqnarray*}
\end{lemma}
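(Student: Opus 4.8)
The plan is to reduce the general quadratic sum to a pure quadratic monomial sum by \emph{completing the square}, and then to evaluate that monomial sum through the quadratic character $\eta$. Since $p$ is odd, the elements $2$ and hence $4a_2$ are invertible in $\mathbb{F}_r$, so I can write
\begin{equation*}
a_2x^2+a_1x+a_0=a_2\bigl(x+a_1(2a_2)^{-1}\bigr)^2+\bigl(a_0-a_1^2(4a_2)^{-1}\bigr),
\end{equation*}
which is checked by expanding the square. Because $\chi_1$ is a character, the constant term pulls out as a multiplicative factor, giving
\begin{equation*}
\sum_{x\in\mathbb{F}_r}\chi_1(f(x))=\chi_1\bigl(a_0-a_1^2(4a_2)^{-1}\bigr)\sum_{x\in\mathbb{F}_r}\chi_1\bigl(a_2(x+a_1(2a_2)^{-1})^2\bigr).
\end{equation*}
The translation $x\mapsto x+a_1(2a_2)^{-1}$ permutes $\mathbb{F}_r$, so the surviving sum equals $\sum_{y\in\mathbb{F}_r}\chi_1(a_2y^2)$. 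It therefore suffices to establish the monomial identity $\sum_{y\in\mathbb{F}_r}\chi_1(a_2y^2)=\eta(a_2)G(\eta,\chi_1)$, which carries all the remaining content.

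For this key step I would count preimages under the squaring map. Adopting the usual convention $\eta(0)=0$, the number of $y\in\mathbb{F}_r$ with $y^2=u$ equals $1+\eta(u)$ for every $u\in\mathbb{F}_r$: a nonzero $u$ has two square roots when it is a square ($\eta(u)=1$) and none otherwise ($\eta(u)=-1$), while $u=0$ has the single root $y=0$, matching $1+\eta(0)=1$. Reindexing the sum by $u=y^2$ thus gives
\begin{equation*}
\sum_{y\in\mathbb{F}_r}\chi_1(a_2y^2)=\sum_{u\in\mathbb{F}_r}\bigl(1+\eta(u)\bigr)\chi_1(a_2u)=\sum_{u\in\mathbb{F}_r}\chi_1(a_2u)+\sum_{u\in\mathbb{F}_r}\eta(u)\chi_1(a_2u).
\end{equation*}
The first sum vanishes by the orthogonality relation for $\chi_1$, since $a_2\neq0$. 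In the second sum I substitute $v=a_2u$; using multiplicativity and $\eta(a_2)^{-1}=\eta(a_2)$ (as $\eta(a_2)=\pm1$), one has $\eta(u)=\eta(a_2^{-1}v)=\eta(a_2)\eta(v)$, whence
\begin{equation*}
\sum_{u\in\mathbb{F}_r}\eta(u)\chi_1(a_2u)=\eta(a_2)\sum_{v\in\mathbb{F}_r}\eta(v)\chi_1(v)=\eta(a_2)G(\eta,\chi_1).
\end{equation*}
Combining the two displays yields the monomial identity, and back-substitution into the first paragraph gives exactly the asserted formula.

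I do not expect a serious obstacle here, since the quantitative evaluation of $G(\eta,\chi_1)$ itself is already supplied by Lemma~\ref{lm:gauss sum} and is not needed for this reduction. The only points demanding care are the uniform validity of the square-root count $1+\eta(u)$ across the boundary case $u=0$ under the convention $\eta(0)=0$, and the harmless identity $\eta(a_2^{-1})=\eta(a_2)$ used in the change of variables; both are immediate from the basic properties of the quadratic character. The computation is otherwise a routine manipulation of the canonical additive character and its orthogonality relation.
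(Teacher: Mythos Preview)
Your proof is correct. The paper does not supply its own proof of this lemma; it is quoted as Theorem~5.33 of Lidl--Niederreiter, so there is no in-paper argument to compare against. Your approach---completing the square, using translation invariance, and converting $\sum_y\chi_1(a_2y^2)$ to a Gauss sum via the preimage count $1+\eta(u)$---is exactly the standard proof given in that reference.
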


\begin{lemma}(See Lemma 7 of~\cite{ding2015twothree})\label{le:eta}
If $m\geq 2$ is even, then $\eta(y) = 1$ for each $y \in \mathbb{F}_{p}^*$.
If $m$ is odd, then $\eta(y) = \bar{\eta}(y)$ for each $y \in
\mathbb{F}_{p}$.

\end{lemma}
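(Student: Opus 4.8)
The plan is to reduce both assertions to the power‑residue description of the characters. Recall that for $y\in\mathbb{F}_r^*$ one has $\eta(y)=y^{(r-1)/2}\in\{1,-1\}$, and likewise $\bar\eta(y)=y^{(p-1)/2}$ for $y\in\mathbb{F}_p^*$. Since $p$ is odd, $\tfrac{p-1}{2}$ is an integer and $r-1=p^m-1=(p-1)\bigl(1+p+\cdots+p^{m-1}\bigr)$, so the key step is to split the exponent as $\tfrac{r-1}{2}=\tfrac{p-1}{2}\bigl(1+p+\cdots+p^{m-1}\bigr)$ and observe that, for $y\in\mathbb{F}_p^*$,
$$\eta(y)=y^{\frac{r-1}{2}}=\bigl(y^{\frac{p-1}{2}}\bigr)^{1+p+\cdots+p^{m-1}}=\bar\eta(y)^{\,1+p+\cdots+p^{m-1}}.$$
Every summand $p^{i}$ is odd, so the exponent $1+p+\cdots+p^{m-1}$ has the same parity as $m$; because $\bar\eta(y)\in\{1,-1\}$, only this parity matters. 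Hence $\eta(y)=1$ when $m$ is even and $\eta(y)=\bar\eta(y)$ when $m$ is odd. The case $y=0$ is handled by the conventions $\eta(0)=\bar\eta(0)=0$, which makes the odd‑$m$ statement hold on all of $\mathbb{F}_p$.

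Alternatively, and perhaps more transparently, one may invoke the standard fact that the quadratic character of an extension is the quadratic character of the base field composed with the norm, $\eta=\bar\eta\circ N_{\mathbb{F}_r/\mathbb{F}_p}$ (both sides are characters of $\mathbb{F}_r^*$ of order exactly $2$, since the norm is surjective); for $y\in\mathbb{F}_p$ all Frobenius conjugates of $y$ equal $y$, so $N_{\mathbb{F}_r/\mathbb{F}_p}(y)=y^{m}$ and therefore $\eta(y)=\bar\eta(y^{m})=\bar\eta(y)^{m}$, giving the two cases at once. There is essentially no real obstacle here: the only points needing a word of care are that $\tfrac{p-1}{2}\in\mathbb{Z}$ (used to factor the exponent) and the convention $\eta(0)=\bar\eta(0)=0$ that makes the $m$‑odd assertion valid at $y=0$. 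The statement is exactly Lemma~7 of~\cite{ding2015twothree} and requires nothing beyond elementary finite‑field arithmetic.
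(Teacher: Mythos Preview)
Your proof is correct. The paper does not supply its own argument for this lemma; it merely quotes the result from~\cite{ding2015twothree}, so your elementary computation via $\eta(y)=y^{(r-1)/2}=\bar\eta(y)^{1+p+\cdots+p^{m-1}}$ (or equivalently $\eta=\bar\eta\circ N_{\mathbb{F}_r/\mathbb{F}_p}$) goes beyond what the paper offers and fills in the details cleanly.
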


{
\begin{lemma}\label{lem:ti}(\cite{ding2015twothree,li2015complete})
For each $c\in \mathbb{F}_{p}$, let $t_c=\#\{a\in \mathbb{F}_{r}:\mathrm{Tr}(a^2)=c\}$. Then
 \begin{eqnarray*}
 t_c=\left\{\begin{array}{lll}
  p^{m-1} && ~~if~~ m ~~odd~~ and~~ c=0,\\
 p^{m-1}+\bar{\eta}(c)(-1)^{\frac{p-1}{2}\frac{m-1}{2}}p^{\frac{m-1}{2}} && ~~if~~m~~odd~~and~~c\neq0, \\
 p^{m-1}-(-1)^{\frac{p-1}{2}\frac{m}{2}}(p-1)p^{\frac{m-2}{2}} && ~~if~~ m ~~even~~ and~~ c=0,\\
 p^{m-1}+(-1)^{\frac{p-1}{2}\frac{m}{2}}p^{\frac{m-2}{2}} && ~~if~~ m ~~even~~ and~~ c\neq0.
 \end{array}\right.
 \end{eqnarray*}
\end{lemma}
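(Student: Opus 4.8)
The plan is to compute $t_c$ by detecting the condition $\mathrm{Tr}(a^2)=c$ with additive characters and then to reduce the resulting inner sum to the quadratic Gauss sums of Lemma~\ref{lm:gauss sum}. Using the orthogonality relation for $\bar{\chi}_1$ together with the identity $\mathrm{Tr}(ya^2)=y\,\mathrm{Tr}(a^2)$, valid for $y\in\mathbb{F}_p$, one writes
\[
t_c=\frac{1}{p}\sum_{y\in\mathbb{F}_p}\sum_{a\in\mathbb{F}_r}\zeta_p^{y(\mathrm{Tr}(a^2)-c)}
   =p^{m-1}+\frac{1}{p}\sum_{y\in\mathbb{F}_p^*}\zeta_p^{-yc}\sum_{a\in\mathbb{F}_r}\chi_1(ya^2),
\]
the term $y=0$ producing the summand $p^{m-1}$. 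For $y\neq0$, Lemma~\ref{lm:expo sum} applied to $f(x)=yx^2$ gives $\sum_{a\in\mathbb{F}_r}\chi_1(ya^2)=\eta(y)G(\eta,\chi_1)$, so
\[
t_c=p^{m-1}+\frac{G(\eta,\chi_1)}{p}\sum_{y\in\mathbb{F}_p^*}\eta(y)\,\bar{\chi}_1(-yc).
\]

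The next step is to evaluate $S_c:=\sum_{y\in\mathbb{F}_p^*}\eta(y)\bar{\chi}_1(-yc)$ using Lemma~\ref{le:eta}. If $m$ is even, then $\eta(y)=1$ for every $y\in\mathbb{F}_p^*$, so $S_0=p-1$ and $S_c=-1$ for $c\neq0$ (the full sum over $\mathbb{F}_p$ vanishing). If $m$ is odd, then $\eta(y)=\bar{\eta}(y)$ on $\mathbb{F}_p$, so $S_0=\sum_{y\in\mathbb{F}_p^*}\bar{\eta}(y)=0$, which already yields $t_0=p^{m-1}$; and for $c\neq0$ the substitution $z=-cy$ turns $S_c$ into $\bar{\eta}(-1)\bar{\eta}(c)\,G(\bar{\eta},\bar{\chi}_1)$. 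Substituting these values, together with the explicit Gauss sums of Lemma~\ref{lm:gauss sum} (and, in the odd case, the product $G(\eta,\chi_1)G(\bar{\eta},\bar{\chi}_1)$), into the displayed expression for $t_c$ and dividing by $p$ gives the four claimed formulas.

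The only genuinely delicate part is keeping track of the fourth roots of unity and collapsing them to the signs $(-1)^{\frac{p-1}{2}\frac{m}{2}}$ and $(-1)^{\frac{p-1}{2}\frac{m-1}{2}}$ appearing in the statement. For this I would use that $\frac{(p-1)^2}{4}=\bigl(\frac{p-1}{2}\bigr)^2\equiv\frac{p-1}{2}\pmod 2$; that $(\sqrt{-1})^m=(-1)^{m/2}$ when $m$ is even, which turns $(-1)^{m-1}(\sqrt{-1})^{\frac{(p-1)^2}{4}m}$ into $-(-1)^{\frac{p-1}{2}\frac{m}{2}}$; and, when $m$ is odd, that $m+1$ is even together with the congruence $\frac{m+3}{2}\equiv\frac{m-1}{2}\pmod 2$, which reconciles the exponent produced by $\bar{\eta}(-1)=(-1)^{(p-1)/2}$ and the product of the two Gauss sums with the one printed in the lemma. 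Everything else is routine arithmetic. Since the lemma is quoted from \cite{ding2015twothree,li2015complete}, one could alternatively simply cite those references, but the derivation above is short and self-contained.
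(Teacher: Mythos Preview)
Your argument is correct. The paper itself does not prove this lemma at all; it merely states it with a citation to \cite{ding2015twothree,li2015complete}, so there is no proof in the paper to compare against. Your derivation via orthogonality, Lemma~\ref{lm:expo sum}, Lemma~\ref{le:eta}, and the explicit Gauss sums of Lemma~\ref{lm:gauss sum} is exactly the standard route and the sign bookkeeping you outline checks out (in particular, for even $m$ one finds $G(\eta,\chi_1)=-(-1)^{\frac{p-1}{2}\frac{m}{2}}p^{m/2}$, and for odd $m$ and $c\neq0$ the product $\bar{\eta}(-1)G(\eta,\chi_1)G(\bar{\eta},\bar{\chi}_1)$ collapses to $(-1)^{\frac{p-1}{2}\frac{m-1}{2}}p^{(m+1)/2}$, as you indicate).
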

}

\section{The proofs of the main results}\label{sec:proof}

Let notation be as before. Our task of this section is to prove Theorem~\ref{thcwe:CD} depicted in Section~\ref{sec:intro}, while Theorem~\ref{wt:CD} follows immediately from Theorem~\ref{thcwe:CD}. Below we present some auxiliary results before proving the main results of this paper.

\begin{lemma}\label{lem:sum3}
Let $a\in \mathbb{F}_{r}^* $ and $\rho\in \mathbb{F}_{p}^*$. Then we have
\begin{eqnarray*}
  &&\sum_{z\in\mathbb{F}_{p}^*}\sum_{y\in\mathbb{F}_{p}}\sum_{x\in\mathbb{F}_{r}}\zeta_p^{\mathrm{Tr}(azx^2+yx)-z\rho}\\
  &&=\left\{\begin{array}{lll}
   (-1)^{\frac{p-1}{2}\frac{m}{2}}\eta(a)p^{\frac{m+2}{2}} &&if~m~even ~and~\mathrm{Tr}(a^{-1})=0,\\
   -(-1)^{\frac{p-1}{2}\frac{m+2}{2}}\eta(a)\bar{\eta}(\mathrm{Tr}(a^{-1}))\bar{\eta}(\rho)p^{\frac{m+2}{2}} &&if~m~even ~and~\mathrm{Tr}(a^{-1})\neq0,\\
   (-1)^{\frac{p-1}{2}\frac{m-1}{2}}\eta(a)\bar{\eta}(\rho)p^{\frac{m+3}{2}} &&if~m~odd ~and~\mathrm{Tr}(a^{-1})=0,\\
   -(-1)^{\frac{p-1}{2}\frac{m-1}{2}}\eta(a)\bar{\eta}(\mathrm{Tr}(a^{-1}))p^{\frac{m+1}{2}} &&if~m~odd ~and~\mathrm{Tr}(a^{-1})\neq0.\\
\end{array} \right.
 \end{eqnarray*}

\end{lemma}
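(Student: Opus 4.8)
The plan is to strip the sum of its variables one at a time, reducing it to the one-variable quadratic Gauss sums recorded in Section~\ref{sec:mathtool}; the four cases of the statement will then appear according to whether the $\mathbb{F}_{p}$-characters that accumulate along the way collapse a geometric sum or a genuine Gauss sum.

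First I would fix $z\in\mathbb{F}_{p}^*$ and $y\in\mathbb{F}_{p}$ and apply Lemma~\ref{lm:expo sum} to the quadratic $azx^2+yx\in\mathbb{F}_{r}[x]$, whose leading coefficient $az$ is nonzero. This gives
$$\sum_{x\in\mathbb{F}_{r}}\zeta_p^{\mathrm{Tr}(azx^2+yx)}=\chi_1\big(-y^2(4az)^{-1}\big)\,\eta(az)\,G(\eta,\chi_1).$$
Since $y^2$ and $(4z)^{-1}$ lie in $\mathbb{F}_{p}$, one has $\mathrm{Tr}(-y^2(4az)^{-1})=-y^2(4z)^{-1}\mathrm{Tr}(a^{-1})$, so $\chi_1(-y^2(4az)^{-1})=\zeta_p^{-(4z)^{-1}\mathrm{Tr}(a^{-1})\,y^2}$; moreover $\eta(az)=\eta(a)\eta(z)$, and $\eta(z)$ is evaluated by Lemma~\ref{le:eta} ($\eta(z)=1$ if $m$ is even, $\eta(z)=\bar\eta(z)$ if $m$ is odd). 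Substituting back, the triple sum becomes
$$\eta(a)\,G(\eta,\chi_1)\sum_{z\in\mathbb{F}_{p}^*}\eta(z)\,\zeta_p^{-z\rho}\Big(\sum_{y\in\mathbb{F}_{p}}\zeta_p^{-(4z)^{-1}\mathrm{Tr}(a^{-1})\,y^2}\Big).$$

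Next I would evaluate the two inner $\mathbb{F}_{p}$-sums. The $y$-sum equals $p$ when $\mathrm{Tr}(a^{-1})=0$; when $\mathrm{Tr}(a^{-1})\neq0$ the coefficient $c=-(4z)^{-1}\mathrm{Tr}(a^{-1})$ is nonzero, so by the analogue of Lemma~\ref{lm:expo sum} over $\mathbb{F}_{p}$ (Theorem~5.33 of~\cite{lidl1983finite}) it equals $\bar\eta(c)\,G(\bar\eta,\bar\chi_1)$, where $\bar\eta(c)=\bar\eta(-1)\bar\eta(z)\bar\eta(\mathrm{Tr}(a^{-1}))$ using $\bar\eta(4)=1$ and $\bar\eta(z^{-1})=\bar\eta(z)$. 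Feeding this back, the remaining $z$-sum has one of two standard shapes: if the product of the accumulated $\bar\eta(z)$-factors is trivial --- which occurs exactly for $m$ even with $\mathrm{Tr}(a^{-1})=0$ and for $m$ odd with $\mathrm{Tr}(a^{-1})\neq0$ --- it is $\sum_{z\in\mathbb{F}_{p}^*}\zeta_p^{-z\rho}=-1$ (recall $\rho\neq0$); otherwise it is $\sum_{z\in\mathbb{F}_{p}^*}\bar\eta(z)\zeta_p^{-z\rho}=\bar\eta(-1)\bar\eta(\rho)\,G(\bar\eta,\bar\chi_1)$. These four combinations are precisely the four cases of the statement.

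Finally I would insert the explicit values $G(\eta,\chi_1)=(-1)^{m-1}(\sqrt{-1})^{\frac{(p-1)^2}{4}m}\sqrt{r}$ and $G(\bar\eta,\bar\chi_1)=(\sqrt{-1})^{\frac{(p-1)^2}{4}}\sqrt{p}$ from Lemma~\ref{lm:gauss sum}, together with $G(\bar\eta,\bar\chi_1)^2=(-1)^{\frac{p-1}{2}}p$ (immediate from the same lemma). Counting powers of $p$: the $\sqrt{r}$ from $G(\eta,\chi_1)$, the $p$ or $\sqrt{p}$ from the $y$-sum, and the $1$ or $\sqrt{p}$ from the $z$-sum multiply to $p^{(m+2)/2}$ when $m$ is even, and to $p^{(m+3)/2}$ (if $\mathrm{Tr}(a^{-1})=0$) or $p^{(m+1)/2}$ (if $\mathrm{Tr}(a^{-1})\neq0$) when $m$ is odd. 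For the signs one uses the elementary identity $(\sqrt{-1})^{\frac{(p-1)^2}{4}n}=(-1)^{\frac{p-1}{2}\cdot\frac{n}{2}}$ valid for even $n$ (applied with $n=m$ if $m$ is even, with $n=m+1$ if $m$ is odd), combines it with $(-1)^{m-1}$ and $\bar\eta(-1)=(-1)^{\frac{p-1}{2}}$, and reduces the resulting exponent modulo $2$ via $(-1)^{p-1}=1$ to recognize the shapes $\frac{p-1}{2}\frac{m}{2}$, $\frac{p-1}{2}\frac{m+2}{2}$, $\frac{p-1}{2}\frac{m-1}{2}$ in the statement. I expect this root-of-unity and sign bookkeeping, together with keeping straight the distinction between $\eta$ (quadratic character of $\mathbb{F}_{r}$) and $\bar\eta$ (quadratic character of $\mathbb{F}_{p}$), to be the only genuine obstacle; everything else is a routine chain of Gauss-sum evaluations.
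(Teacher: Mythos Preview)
Your proposal is correct and follows essentially the same route as the paper's proof: apply Lemma~\ref{lm:expo sum} to the $x$-sum, pull out the $\mathbb{F}_p$-linear factor to reduce $\chi_1(-y^2(4az)^{-1})$ to an $\mathbb{F}_p$-quadratic in $y$, evaluate the $y$-sum (as $p$ or a Gauss sum over $\mathbb{F}_p$ according to $\mathrm{Tr}(a^{-1})$), use Lemma~\ref{le:eta} to reduce $\eta(z)$, and then evaluate the $z$-sum as $-1$ or $\bar\eta(-\rho)G(\bar\eta,\bar\chi_1)$ before plugging in Lemma~\ref{lm:gauss sum}. The paper suppresses the final sign and power-of-$p$ bookkeeping behind ``From Lemma~\ref{lm:gauss sum}, we get the desired conclusions,'' so your detailed last paragraph actually fills in what the paper leaves implicit.
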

\begin{proof}
{
It follows from Lemmas~\ref{lm:expo sum} and~\ref{le:eta} that
\begin{eqnarray*}
  &&\sum_{z\in\mathbb{F}_{p}^*}\sum_{y\in\mathbb{F}_{p}}\sum_{x\in\mathbb{F}_{r}}\zeta_p^{\mathrm{Tr}(azx^2+yx)-z\rho}\\
  &&=\sum_{z\in\mathbb{F}_{p}^*}\zeta_p^{-z\rho}\sum_{y\in\mathbb{F}_{p}}\sum_{x\in\mathbb{F}_{r}}\zeta_p^{\mathrm{Tr}(azx^2+yx)}\\
  &&=\sum_{z\in\mathbb{F}_{p}^*}\zeta_p^{-z\rho}\sum_{y\in\mathbb{F}_{p}}
  \chi_1\left(-\frac{y^2}{4az}\right)\eta(az)G(\eta,\chi_1)\\
  &&=\eta(a)G(\eta,\chi_1)\sum_{z\in\mathbb{F}_{p}^*}\zeta_p^{-z\rho}\eta(z)
  \sum_{y\in\mathbb{F}_{p}}\zeta_p^{-\frac{1}{4z}\mathrm{Tr}(a^{-1})y^2}\\
&&=\left\{\begin{array}{lll}
  p\eta(a)G(\eta,\chi_1)\sum_{z\in\mathbb{F}_{p}^*}\zeta_p^{-z\rho}\eta(z)
  &&\mathrm{if}~\mathrm{Tr}(a^{-1})=0\\
   \eta(a)G(\eta,\chi_1)\sum_{z\in\mathbb{F}_{p}^*}\zeta_p^{-z\rho}\eta(z)\bar{\eta}
   \left(-\frac{\mathrm{Tr}(a^{-1})}{4z}\right)G(\bar{\eta},\bar{\chi}_1)
   &&\mathrm{if}~\mathrm{Tr}(a^{-1})\neq0\\
   \end{array} \right.\\
&&=\left\{\begin{array}{lll}
  p\eta(a)G(\eta,\chi_1)\sum_{z\in\mathbb{F}_{p}^*}\zeta_p^{-z\rho}\eta(z)
  &&\mathrm{if}~\mathrm{Tr}(a^{-1})=0\\
   \eta(a)\bar{\eta}(-\mathrm{Tr}(a^{-1}))G(\eta,\chi_1)G(\bar{\eta},\bar{\chi}_1)
   \sum_{z\in\mathbb{F}_{p}^*}\zeta_p^{-z\rho}\eta(z)\bar{\eta}(z)
   &&\mathrm{if}~\mathrm{Tr}(a^{-1})\neq0\\
\end{array} \right.\\
&&=\left\{\begin{array}{lll}
  -p\eta(a)G(\eta,\chi_1)
  &&\mathrm{if}~m~\mathrm{even} ~\mathrm{and}~\mathrm{Tr}(a^{-1})=0,\\
   \eta(a)\bar{\eta}(-\mathrm{Tr}(a^{-1}))\bar{\eta}(-\rho)G(\eta,\chi_1)G(\bar{\eta},\bar{\chi}_1)^2
  &&\mathrm{if}~m~\mathrm{even} ~\mathrm{and}~\mathrm{Tr}(a^{-1})\neq0,\\
    p\eta(a)\bar{\eta}(-\rho)G(\eta,\chi_1)G(\bar{\eta},\bar{\chi}_1)
  &&\mathrm{if}~m~\mathrm{odd}~ \mathrm{and}~\mathrm{Tr}(a^{-1})=0,\\
   -\eta(a)\bar{\eta}(-\mathrm{Tr}(a^{-1}))G(\eta,\chi_1)G(\bar{\eta},\bar{\chi}_1)
  &&\mathrm{if}~m~\mathrm{odd}~ \mathrm{and}~\mathrm{Tr}(a^{-1})\neq0.\\
\end{array} \right.
\end{eqnarray*}

From Lemma~\ref{lm:gauss sum}, we get the desired conclusions.

}



\hfill\space$\qed$
 \end{proof}

\begin{lemma}\label{lem:Na rho}
For any $a\in \mathbb{F}_{r}^* $ and any $\rho\in \mathbb{F}_{p}^*$, let
\begin{equation*}
    N_a(\rho)=\#\{x\in \mathbb{F}_{r}^*:\mathrm{Tr}(x)=0 ~~ and~~ \mathrm{Tr}(ax^2)=\rho \}.
\end{equation*}
Then we have
\begin{eqnarray*}
   N_a(\rho)=\left\{\begin{array}{lll}
    &&p^{m-2}+(-1)^{\frac{p-1}{2}\frac{m}{2}}\eta(a)p^{\frac{m-2}{2}} \\
    &&~~~~~~~~~~~~~~~~~~~~~~~~~~~~~~if~~m~~even ~~and~~\mathrm{Tr}(a^{-1})=0,\\
    &&p^{m-2}-(-1)^{\frac{p-1}{2}\frac{m+2}{2}}\eta(a)\bar{\eta}(\mathrm{Tr}(a^{-1}))
    \bar{\eta}(\rho)p^{\frac{m-2}{2}} \\
    &&~~~~~~~~~~~~~~~~~~~~~~~~~~~~~~if~~m~~even ~~and~~\mathrm{Tr}(a^{-1})\neq0,\\
    &&p^{m-2}+(-1)^{\frac{p-1}{2}\frac{m-1}{2}}\eta(a)\bar{\eta}(\rho)p^{\frac{m-1}{2}} \\
    &&~~~~~~~~~~~~~~~~~~~~~~~~~~~~~~if~~m~~odd ~~and~~\mathrm{Tr}(a^{-1})=0,\\
    &&p^{m-2}-(-1)^{\frac{p-1}{2}\frac{m-1}{2}}\eta(a)\bar{\eta}(\mathrm{Tr}(a^{-1}))p^{\frac{m-3}{2}} \\ &&~~~~~~~~~~~~~~~~~~~~~~~~~~~~~~if~~m~~odd ~~and~~\mathrm{Tr}(a^{-1})\neq0.\\
\end{array} \right.
 \end{eqnarray*}
\end{lemma}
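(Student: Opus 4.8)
The plan is to express $N_a(\rho)$ through additive characters so that Lemma~\ref{lem:sum3} applies directly. Starting from the definition, I would write
\begin{equation*}
N_a(\rho)=\sum_{x\in\mathbb{F}_r^*}\left(\frac{1}{p}\sum_{y\in\mathbb{F}_p}\zeta_p^{y\mathrm{Tr}(x)}\right)\left(\frac{1}{p}\sum_{z\in\mathbb{F}_p}\zeta_p^{z(\mathrm{Tr}(ax^2)-\rho)}\right),
\end{equation*}
using the standard orthogonality relation for the canonical additive character of $\mathbb{F}_p$ to detect the two conditions $\mathrm{Tr}(x)=0$ and $\mathrm{Tr}(ax^2)=\rho$. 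Observe that $y\mathrm{Tr}(x)=\mathrm{Tr}(yx)$ and $z\mathrm{Tr}(ax^2)=\mathrm{Tr}(azx^2)$ since $y,z\in\mathbb{F}_p$, so the summand becomes $\zeta_p^{\mathrm{Tr}(azx^2+yx)-z\rho}$.

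Next I would split the triple sum over $z\in\mathbb{F}_p$ into the term $z=0$ and the terms $z\in\mathbb{F}_p^*$. For $z=0$ the inner sum is $\sum_{y\in\mathbb{F}_p}\sum_{x\in\mathbb{F}_r^*}\zeta_p^{\mathrm{Tr}(yx)}$; isolating $y=0$ gives $r-1=p^m-1$, and for $y\neq0$ the orthogonality of the canonical additive character of $\mathbb{F}_r$ gives $\sum_{x\in\mathbb{F}_r}\zeta_p^{\mathrm{Tr}(yx)}=0$, hence $\sum_{x\in\mathbb{F}_r^*}\zeta_p^{\mathrm{Tr}(yx)}=-1$, contributing $-(p-1)$. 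So the $z=0$ part of $p^2 N_a(\rho)$ equals $(p^m-1)-(p-1)=p^m-p$. I would then replace the sum over $x\in\mathbb{F}_r^*$ by the sum over $x\in\mathbb{F}_r$ in the $z\neq0$ part: the difference is the $x=0$ contribution $\sum_{z\in\mathbb{F}_p^*}\sum_{y\in\mathbb{F}_p}\zeta_p^{-z\rho}=p\sum_{z\in\mathbb{F}_p^*}\zeta_p^{-z\rho}=-p$ (again by orthogonality over $\mathbb{F}_p$, since $\rho\neq0$). Therefore
\begin{equation*}
p^2N_a(\rho)=(p^m-p)-(-p)+\sum_{z\in\mathbb{F}_p^*}\sum_{y\in\mathbb{F}_p}\sum_{x\in\mathbb{F}_r}\zeta_p^{\mathrm{Tr}(azx^2+yx)-z\rho}=p^m+S,
\end{equation*}
where $S$ is exactly the sum evaluated in Lemma~\ref{lem:sum3}.

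Finally I would substitute the four cases of Lemma~\ref{lem:sum3} for $S$ and divide by $p^2$. Since $a\in\mathbb{F}_r^*$ we have $p^m/p^2=p^{m-2}$, and each value of $S$ from Lemma~\ref{lem:sum3} is $p^2$ times the corresponding correction term in the claimed formula: e.g. in the first case $S=(-1)^{\frac{p-1}{2}\frac{m}{2}}\eta(a)p^{\frac{m+2}{2}}$ gives $S/p^2=(-1)^{\frac{p-1}{2}\frac{m}{2}}\eta(a)p^{\frac{m-2}{2}}$, and similarly $p^{\frac{m+2}{2}}/p^2=p^{\frac{m-2}{2}}$, $p^{\frac{m+3}{2}}/p^2=p^{\frac{m-1}{2}}$, $p^{\frac{m+1}{2}}/p^2=p^{\frac{m-3}{2}}$ match the exponents in the four cases. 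This yields the stated expression for $N_a(\rho)$ in all four cases. The bookkeeping is entirely routine once the character-sum setup is in place; the only real work — the Gauss-sum evaluation of the triple character sum, including the dependence on whether $\mathrm{Tr}(a^{-1})$ vanishes — has already been carried out in Lemma~\ref{lem:sum3}, so there is no substantial obstacle remaining. The one point requiring a little care is tracking the $x=0$ and $y=0$ contributions correctly when passing between $\mathbb{F}_r^*$ and $\mathbb{F}_r$, and confirming the sign of $\sum_{z\in\mathbb{F}_p^*}\zeta_p^{-z\rho}=-1$.
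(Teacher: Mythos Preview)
Your proposal is correct and follows essentially the same route as the paper: express $N_a(\rho)$ via additive-character orthogonality, isolate the triple sum $S=\sum_{z\in\mathbb{F}_p^*}\sum_{y\in\mathbb{F}_p}\sum_{x\in\mathbb{F}_r}\zeta_p^{\mathrm{Tr}(azx^2+yx)-z\rho}$, and invoke Lemma~\ref{lem:sum3}. The only cosmetic difference is that the paper observes at the outset that, because $\rho\neq0$, the point $x=0$ never satisfies $\mathrm{Tr}(ax^2)=\rho$, so one may sum over $x\in\mathbb{F}_r$ from the start and skip the $x=0$ bookkeeping you carry out by hand; both yield $N_a(\rho)=p^{m-2}+p^{-2}S$.
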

\begin{proof}
We point out that
\begin{eqnarray*}
    N_a(\rho)=\#\{x\in \mathbb{F}_{r}:\mathrm{Tr}(x)=0 ~~ and~~ \mathrm{Tr}(ax^2)=\rho \}
\end{eqnarray*} since $\rho\neq 0$.
Hence, for any $a\in \mathbb{F}_{r}^* $ and any $\rho\in \mathbb{F}_{p}^*$, we have
\begin{eqnarray*}
    N_a(\rho)&=&
        p^{-2}\sum_{x\in\mathbb{F}_{r}}\left(\sum_{y\in\mathbb{F}_{p}}\zeta_p^{y\mathrm{Tr}(x)}\right)
        \left(\sum_{z\in\mathbb{F}_{p}}\zeta_p^{z(\mathrm{Tr}(ax^2)-\rho)}\right)\\
    &=&
    p^{-2}\sum_{y\in\mathbb{F}_{p}^*}\sum_{x\in\mathbb{F}_{r}}\zeta_p^{\mathrm{Tr}(yx)}+ p^{-2}\sum_{z\in\mathbb{F}_{p}^*}\sum_{y\in\mathbb{F}_{p}}\sum_{x\in\mathbb{F}_{r}}\zeta_p^{\mathrm{Tr}(azx^2+yx)-z\rho}+p^{m-2}.
 \end{eqnarray*}
 Note that
\begin{equation*}
\sum_{y\in\mathbb{F}_{p}^*}\sum_{x\in\mathbb{F}_{r}}\zeta_p^{\mathrm{Tr}(yx)}=0
\end{equation*}
since
\begin{equation*}
   \sum_{x\in\mathbb{F}_{r}}\zeta_p^{\mathrm{Tr}(yx)}=\left\{\begin{array}{lll}r,&&\mathrm{if}~~y=0,\\
0,&&\mathrm{if}~~y\in\mathbb{F}_{p}^*.\\
\end{array} \right.
 \end{equation*}
The desired conclusions then follow from Lemma~\ref{lem:sum3}.
\hfill\space$\qed$
\end{proof}

In order to calculate the frequency of the
{complete weight enumerator of codewords} in $C_D$, we shall compute
\begin{equation*}
    \#\{a\in \mathbb{F}_{r}^*:\eta(a)=\pm1 ~~ \mathrm{and}~~ \mathrm{Tr}(a^{-1})=0 \}
\end{equation*}
and
\begin{equation*}
   \#\{a\in \mathbb{F}_{r}^*:\eta(a)\bar{\eta}(\mathrm{Tr}(a^{-1}))=\pm1 \},
\end{equation*} which are given in the following three lemmas.
\begin{lemma}\label{lem:ni}
For any $a\in \mathbb{F}_{r}^* $, let
\begin{equation*}
    n_i=\#\{a\in \mathbb{F}_{r}^*:\eta(a)=i ~~ and~~ \mathrm{Tr}(a^{-1})=0 \}, ~~i\in\{1,-1\}.
\end{equation*}
Then, for $m$ being odd, we have
\begin{eqnarray*}
     n_1=   n_{-1}=\frac{p^{m-1}-1}{2},
 \end{eqnarray*}
 and for $m$ being even, we have
\begin{eqnarray*}
  \left\{\begin{array}{lll}
   n_1&=&\frac{r-1}{2p}+\frac{p-1}{p}\eta_0^{(2,r)},\\
   n_{-1}&=&\frac{r-1}{2p}+\frac{p-1}{p}\eta_1^{(2,r)},\\
\end{array} \right.
 \end{eqnarray*} where $\eta_0^{(2,r)}$ and $\eta_1^{(2,r)}$ are given in Lemma~\ref{lemN=2}.
\end{lemma}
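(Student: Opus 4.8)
The plan is to reduce $n_i$ to a single quadratic character sum over the defining set $D$ and then evaluate that sum by additive-character orthogonality together with the quadratic Gauss sum. First I would note that inversion $a\mapsto a^{-1}$ is a bijection of $\mathbb{F}_r^*$ onto itself, and since $\eta$ is $\{1,-1\}$-valued on $\mathbb{F}_r^*$ we have $\eta(a^{-1})=\eta(a)^{-1}=\eta(a)$. Writing $b=a^{-1}$ therefore gives
$$n_i=\#\{b\in\mathbb{F}_r^*:\eta(b)=i~\text{and}~\mathrm{Tr}(b)=0\}.$$
Using the indicator $\tfrac{1}{2}\bigl(1+i\,\eta(b)\bigr)$ to select $\eta(b)=i$, and recalling that $\#\{b\in\mathbb{F}_r^*:\mathrm{Tr}(b)=0\}=p^{m-1}-1$, this yields
$$n_i=\frac{p^{m-1}-1}{2}+\frac{i}{2}\,S,\qquad S:=\sum_{\substack{b\in\mathbb{F}_r^*\\ \mathrm{Tr}(b)=0}}\eta(b).$$

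Next I would compute $S$. Since $\eta(0)=0$ one may extend the sum to all of $\mathbb{F}_r$ and detect the condition $\mathrm{Tr}(b)=0$ by $\frac{1}{p}\sum_{y\in\mathbb{F}_p}\chi_1(yb)$, so that $S=\frac{1}{p}\sum_{y\in\mathbb{F}_p}\sum_{b\in\mathbb{F}_r}\eta(b)\chi_1(yb)$. The $y=0$ term vanishes because $\sum_{b\in\mathbb{F}_r}\eta(b)=0$, while for $y\neq0$ the substitution $b\mapsto y^{-1}b$ turns the inner sum into $\eta(y)G(\eta,\chi_1)$; hence $S=\frac{G(\eta,\chi_1)}{p}\sum_{y\in\mathbb{F}_p^*}\eta(y)$. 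Here Lemma~\ref{le:eta} separates the two parity cases: if $m$ is odd then $\eta(y)=\bar{\eta}(y)$ on $\mathbb{F}_p^*$, so $\sum_{y\in\mathbb{F}_p^*}\eta(y)=0$, giving $S=0$ and $n_1=n_{-1}=\frac{p^{m-1}-1}{2}$; if $m$ is even then $\eta(y)=1$ for all $y\in\mathbb{F}_p^*$, so $\sum_{y\in\mathbb{F}_p^*}\eta(y)=p-1$ and $S=\frac{(p-1)G(\eta,\chi_1)}{p}$.

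Finally, for even $m$ I would rewrite the Gauss sum in terms of the Gaussian periods of order $2$. Splitting $\mathbb{F}_r^*$ into squares and non-squares gives $G(\eta,\chi_1)=\eta_0^{(2,r)}-\eta_1^{(2,r)}$, while $\eta_0^{(2,r)}+\eta_1^{(2,r)}=\sum_{x\in\mathbb{F}_r^*}\chi_1(x)=-1$; hence $G(\eta,\chi_1)=2\eta_0^{(2,r)}+1=-(2\eta_1^{(2,r)}+1)$. Substituting into $n_1=\frac{p^{m-1}-1}{2}+\frac{(p-1)G(\eta,\chi_1)}{2p}$ and using the identity $\frac{p^{m-1}-1}{2}+\frac{p-1}{2p}=\frac{p^m-1}{2p}=\frac{r-1}{2p}$ produces $n_1=\frac{r-1}{2p}+\frac{p-1}{p}\eta_0^{(2,r)}$, and the analogous computation for $i=-1$, using $G(\eta,\chi_1)=-(2\eta_1^{(2,r)}+1)$, gives $n_{-1}=\frac{r-1}{2p}+\frac{p-1}{p}\eta_1^{(2,r)}$, as asserted. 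I do not expect a genuine obstacle here; the only points requiring care are the identity $\eta(a^{-1})=\eta(a)$ that legitimizes the inversion substitution, the correct bookkeeping in passing from $G(\eta,\chi_1)$ to $\eta_0^{(2,r)}$ and $\eta_1^{(2,r)}$, and the invocation of Lemma~\ref{le:eta} to handle the two parities of $m$ separately.
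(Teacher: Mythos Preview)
Your argument is correct. The route differs from the paper's in organization rather than in substance. The paper works directly with the cyclotomic class $C_0^{(2,r)}$: it writes $n_1=\sum_{x\in C_0^{(2,r)}}\frac{1}{p}\sum_{y\in\mathbb{F}_p}\zeta_p^{y\mathrm{Tr}(x)}$ and, for even $m$, uses $\mathbb{F}_p^*\subseteq C_0^{(2,r)}$ to recognise each inner sum as the Gaussian period $\eta_0^{(2,r)}$, arriving at the stated formula in one step; the odd case is then dispatched by the observation $\frac{r-1}{p-1}\equiv 1\pmod 2$. You instead encode the square condition by the indicator $\tfrac12(1+i\,\eta(b))$, reduce everything to the single character sum $S=\sum_{\mathrm{Tr}(b)=0}\eta(b)$, and evaluate $S$ via the quadratic Gauss sum $G(\eta,\chi_1)$ together with Lemma~\ref{le:eta}, only afterwards converting $G(\eta,\chi_1)=2\eta_0^{(2,r)}+1$ to match the Gaussian-period form of the statement. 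Your approach handles both parities in a unified calculation (the parity enters only through $\sum_{y\in\mathbb{F}_p^*}\eta(y)$), at the cost of an extra conversion step at the end; the paper's approach reaches the Gaussian periods directly but treats the two parities separately from the outset. Both also tacitly (the paper) or explicitly (you) use $\eta(a^{-1})=\eta(a)$ to replace the condition $\mathrm{Tr}(a^{-1})=0$ by $\mathrm{Tr}(b)=0$.
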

\begin{proof}
Observe that $  n_{-1}=n-n_1=p^{m\!-\!1}\!-1-n_1$. Thus, we only focus on $n_1$.

Let $\alpha$ be a fixed primitive element of $\mathbb{F}_r$. Then $\mathbb{F}_r^*=\langle \alpha \rangle$ and $\mathbb{F}_p^*=\langle \alpha^{\frac{r-1}{p-1}} \rangle$.

Note that $\eta(a)=1$ if and only if $a\in C_0^{(2,r)}$.

For the case of $m$ being even, we have $\mathbb{F}_p^* \subseteq C_0^{(2,r)}$ since $2$ divides $\frac{r-1}{p-1}$.

Therefore, we obtain
\begin{eqnarray*}
 n_1&=&\sum_{x\in C_0^{(2,r)}}\frac{1}{p}\sum_{y\in\mathbb{F}_{p}}\zeta_p^{y\mathrm{Tr}(x)}\\
    &=&\frac{r-1}{2p}+\frac{1}{p}\sum_{y\in\mathbb{F}_{p}^*}\sum_{x\in C_0^{(2,r)}}\zeta_p^{\mathrm{Tr}(yx)}\\
    &=&\frac{r-1}{2p}+\frac{p-1}{p}\eta_0^{(2,r)},
  \end{eqnarray*}
where $\eta_0^{(2,r)}$ is given in Lemma~\ref{lemN=2}.

{
Similarly, for the case of $m$ being odd, we have $\frac{r-1}{p-1}\equiv 1\mod2$,
from which the asserted result follows.
}

%
  \hfill\space$\qed$
\end{proof}

\begin{lemma}\label{lem:nij}
For any $a\in \mathbb{F}_{r}^* $, let
\begin{equation*}
    n_{i,j}=\#\{a\in \mathbb{F}_{r}^*:\eta(a)=i ~~ and~~ \bar{\eta}(\mathrm{Tr}(a^{-1}))=j \}, ~~i,j\in\{1,-1\}.
\end{equation*}
Then, for $m$ being even, we have
\begin{eqnarray*}
     n_{1,1}=   n_{1,-1}=\frac{p-1}{4}\left(p^{m-1}+(-1)^{\frac{p-1}{2}\frac{m}{2}}p^{\frac{m-2}{2}}\right),
 \end{eqnarray*}
 and for $m$ being odd, we have
\begin{eqnarray*}
  \left\{\begin{array}{lll}
   n_{1,1}&=&\frac{p-1}{4}\left(p^{m-1}+(-1)^{\frac{p-1}{2}\frac{m-1}{2}}p^{\frac{m-1}{2}}\right),\\
   n_{1,-1}&=&\frac{p-1}{4}\left(p^{m-1}-(-1)^{\frac{p-1}{2}\frac{m-1}{2}}p^{\frac{m-1}{2}}\right).\\
\end{array} \right.
 \end{eqnarray*}
\end{lemma}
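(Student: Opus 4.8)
The plan is to detect both conditions by the quadratic characters $\eta$ and $\bar{\eta}$ and reduce $n_{1,j}$ to a handful of standard character sums. Since the requirement $\bar{\eta}(\mathrm{Tr}(a^{-1}))=j\in\{1,-1\}$ already forces $\mathrm{Tr}(a^{-1})\neq0$, on that set we may write $\mathbf{1}_{\bar{\eta}(\mathrm{Tr}(a^{-1}))=j}=\tfrac12\bigl(1+j\bar{\eta}(\mathrm{Tr}(a^{-1}))\bigr)$ and $\mathbf{1}_{\eta(a)=1}=\tfrac12\bigl(1+\eta(a)\bigr)$, so that $4n_{1,j}=\sum_{a\in\mathbb{F}_r^*,\,\mathrm{Tr}(a^{-1})\neq0}\bigl(1+\eta(a)\bigr)\bigl(1+j\bar{\eta}(\mathrm{Tr}(a^{-1}))\bigr)$. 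First I would apply the bijection $a\mapsto a^{-1}$ of $\mathbb{F}_r^*$; since $\eta$ is real-valued, $\eta(a^{-1})=\eta(a)$, and the sum becomes $\sum_{b\in\mathbb{F}_r^*,\,\mathrm{Tr}(b)\neq0}\bigl(1+\eta(b)\bigr)\bigl(1+j\bar{\eta}(\mathrm{Tr}(b))\bigr)$. Expanding the product splits this as $S_0+S_1+jS_2+jS_3$, where $S_0=\#\{b:\mathrm{Tr}(b)\neq0\}$, $S_1=\sum\eta(b)$, $S_2=\sum\bar{\eta}(\mathrm{Tr}(b))$ and $S_3=\sum\eta(b)\bar{\eta}(\mathrm{Tr}(b))$, all over the same index set; in $S_1,S_2,S_3$ the restriction $\mathrm{Tr}(b)\neq0$ and the point $b=0$ can be dropped for free (setting $\eta(0)=0$), since the summands vanish there.

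Next I would evaluate the four sums. By balancedness of the trace, $S_0=r-p^{m-1}=(p-1)p^{m-1}$, and each nonzero value of $\mathrm{Tr}$ is attained $p^{m-1}$ times, so $S_2=p^{m-1}\sum_{c\in\mathbb{F}_p^*}\bar{\eta}(c)=0$. For $S_1$, inserting $\mathbf{1}_{\mathrm{Tr}(b)=0}=p^{-1}\sum_{y\in\mathbb{F}_p}\chi_1(yb)$ and using orthogonality gives $S_1=-\sum_{\mathrm{Tr}(b)=0}\eta(b)=-p^{-1}G(\eta,\chi_1)\sum_{y\in\mathbb{F}_p^*}\eta(y)$; by Lemma~\ref{le:eta} the $y$-sum equals $p-1$ when $m$ is even and $0$ when $m$ is odd, so $S_1=-\tfrac{p-1}{p}G(\eta,\chi_1)$ or $0$ accordingly. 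For $S_3$, grouping $b$ by $c=\mathrm{Tr}(b)\in\mathbb{F}_p^*$ and using orthogonality again reduces the inner sum to $p^{-1}G(\eta,\chi_1)\sum_{y\in\mathbb{F}_p^*}\eta(y)\zeta_p^{-yc}$; invoking Lemma~\ref{le:eta} once more, for $m$ even this $y$-sum equals $-1$ independently of $c$, whence $S_3=-p^{-1}G(\eta,\chi_1)\sum_{c\in\mathbb{F}_p^*}\bar{\eta}(c)=0$, while for $m$ odd the $y$-sum is $\bar{\eta}(-1)\bar{\eta}(c)\,G(\bar{\eta},\bar{\chi}_1)$, so after $\sum_{c\in\mathbb{F}_p^*}\bar{\eta}(c)^2=p-1$ one gets $S_3=\tfrac{p-1}{p}\bar{\eta}(-1)\,G(\eta,\chi_1)G(\bar{\eta},\bar{\chi}_1)$.

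Finally I would substitute the closed forms from Lemma~\ref{lm:gauss sum}, collect, and divide by $4$. For $m$ even only $S_0$ and $S_1$ survive; using $G(\eta,\chi_1)=-(-1)^{\frac{p-1}{2}\frac{m}{2}}p^{m/2}$ in this case yields $n_{1,1}=n_{1,-1}=\tfrac{p-1}{4}\bigl(p^{m-1}+(-1)^{\frac{p-1}{2}\frac{m}{2}}p^{\frac{m-2}{2}}\bigr)$. For $m$ odd only $S_0$ and $S_3$ survive, and using $G(\eta,\chi_1)G(\bar{\eta},\bar{\chi}_1)=(\sqrt{-1})^{\frac{(p-1)^2}{4}(m+1)}p^{\frac{m+1}{2}}$ together with $\bar{\eta}(-1)=(-1)^{\frac{p-1}{2}}$ gives $n_{1,\pm1}=\tfrac{p-1}{4}\bigl(p^{m-1}\pm(-1)^{\frac{p-1}{2}\frac{m-1}{2}}p^{\frac{m-1}{2}}\bigr)$. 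I expect the only genuine friction to be in this last step: reconciling the powers of $\sqrt{-1}$ produced by Lemma~\ref{lm:gauss sum} (and the factor $\bar{\eta}(-1)$) with the signs $(-1)^{\frac{p-1}{2}\frac{m}{2}}$ and $(-1)^{\frac{p-1}{2}\frac{m-1}{2}}$ appearing in the statement, which forces a split into the cases $p\equiv1$ and $p\equiv3\pmod4$ and careful tracking of the parities of $m$ and of $(m\pm1)/2$. Everything else is routine additive-character orthogonality; the counts $n_{-1,j}$, if wanted elsewhere, follow identically with $1+\eta(a)$ replaced by $1-\eta(a)$.
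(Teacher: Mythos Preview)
Your argument is correct, but the route is genuinely different from the paper's. The paper observes that $\eta(a)=1$ is equivalent to $a\in C_0^{(2,r)}$, so after the bijection $a\mapsto a^{-1}$ one is counting nonzero squares $b$ with $\bar{\eta}(\mathrm{Tr}(b))=j$; the $2$-to-$1$ squaring map then gives $n_{1,j}=\tfrac12\#\{c\in\mathbb{F}_r^*:\bar{\eta}(\mathrm{Tr}(c^2))=j\}$, and the result drops out immediately from Lemma~\ref{lem:ti}, which already tabulates $t_c=\#\{a:\mathrm{Tr}(a^2)=c\}$. Your approach instead expands both indicators via $\tfrac12(1\pm\cdot)$ and evaluates the resulting four character sums $S_0,\dots,S_3$ directly with Gauss sums; this is longer but self-contained, in that it does not rely on Lemma~\ref{lem:ti} (whose proof, of course, hides the same Gauss-sum computation). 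One small wording slip: the claim that ``in $S_1,S_2,S_3$ the restriction $\mathrm{Tr}(b)\neq0$ can be dropped since the summands vanish there'' is false for $S_1$, where the summand is $\eta(b)$; your actual computation of $S_1$ does not use this and is correct, but you should restrict that remark to $S_2$ and $S_3$.
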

\begin{proof}
We note the following fact:
\begin{eqnarray*}
n_{1,j}&&=\#\{a\in \mathbb{F}_{r}^*:a\in C_0^{(2,r)}, \bar{\eta}(\mathrm{Tr}(a^{-1}))=j\}\\
&&=\frac{1}{2}\#\{a\in \mathbb{F}_{r}^*:\bar{\eta}(\mathrm{Tr}(a^2))=j\}, ~~j\in \{1,-1\} .
\end{eqnarray*}
Then, the desired results follow from Lemma~\ref{lem:ti}.

 \hfill\space$\qed$
\end{proof}

\begin{lemma}\label{lem:si}
For any $a\in \mathbb{F}_{r}^* $, let
\begin{equation*}
    s_i=\#\{a\in \mathbb{F}_{r}^*:\eta(a)\bar{\eta}(\mathrm{Tr}(a^{-1}))=i \}, ~~i\in\{1,-1\}.
\end{equation*}
Then, for $m$ being even, we have
\begin{eqnarray*}
     s_1=   s_{-1}=\frac{p^{m}-p^{m-1}}{2},
 \end{eqnarray*}
 and for $m$ being odd, we have
\begin{eqnarray*}
  \left\{\begin{array}{lll}
   s_1&=&\frac{p-1}{2}\left(p^{m-1}+(-1)^{\frac{p-1}{2}\frac{m-1}{2}}p^{\frac{m-1}{2}}\right),\\
   s_{-1}&=&\frac{p-1}{2}\left(p^{m-1}-(-1)^{\frac{p-1}{2}\frac{m-1}{2}}p^{\frac{m-1}{2}}\right).\\
\end{array} \right.
 \end{eqnarray*}
\end{lemma}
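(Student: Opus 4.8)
The plan is to pin down $s_1$ and $s_{-1}$ by separately computing $s_1+s_{-1}$ and $s_1-s_{-1}$ and then solving the resulting $2\times 2$ linear system. Since $\bar\eta(\mathrm{Tr}(a^{-1}))=0$ precisely when $\mathrm{Tr}(a^{-1})=0$, the product $\eta(a)\bar\eta(\mathrm{Tr}(a^{-1}))$ is nonzero exactly when $\mathrm{Tr}(a^{-1})\neq 0$, so $s_1+s_{-1}=\#\{a\in\mathbb{F}_r^*:\mathrm{Tr}(a^{-1})\neq 0\}$. Because $a\mapsto a^{-1}$ permutes $\mathbb{F}_r^*$ and the orthogonality relation gives $\#\{b\in\mathbb{F}_r:\mathrm{Tr}(b)=0\}=p^{m-1}$, this count equals $(p^m-1)-(p^{m-1}-1)=p^m-p^{m-1}$, regardless of the parity of $m$.

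Next I would write $s_1-s_{-1}=\sum_{a\in\mathbb{F}_r^*}\eta(a)\bar\eta(\mathrm{Tr}(a^{-1}))$, the terms with $\mathrm{Tr}(a^{-1})=0$ contributing nothing. Substituting $a=b^{-1}$ and using $\eta(b^{-1})=\eta(b)$ turns this into $\sum_{b\in\mathbb{F}_r^*}\eta(b)\bar\eta(\mathrm{Tr}(b))$. I then expand $\bar\eta(\mathrm{Tr}(b))$ via the standard identity $\bar\eta(c)\,G(\bar\eta,\bar\chi_1)=\sum_{w\in\mathbb{F}_p^*}\bar\eta(w)\bar\chi_1(cw)$ (valid for every $c\in\mathbb{F}_p$, both sides being $0$ when $c=0$), interchange the two summations, and evaluate the inner sum $\sum_{b\in\mathbb{F}_r^*}\eta(b)\chi_1(wb)=\eta(w)G(\eta,\chi_1)$ by the substitution $b=w^{-1}t$. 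This yields
\[
s_1-s_{-1}=\frac{G(\eta,\chi_1)}{G(\bar\eta,\bar\chi_1)}\sum_{w\in\mathbb{F}_p^*}\eta(w)\bar\eta(w).
\]
Here Lemma~\ref{le:eta} does the structural work: when $m$ is even, $\eta(w)=1$ on $\mathbb{F}_p^*$, so $\sum_{w\in\mathbb{F}_p^*}\eta(w)\bar\eta(w)=\sum_{w\in\mathbb{F}_p^*}\bar\eta(w)=0$ and hence $s_1-s_{-1}=0$; when $m$ is odd, $\eta(w)=\bar\eta(w)$, so the sum equals $\sum_{w\in\mathbb{F}_p^*}\bar\eta(w)^2=p-1$ and $s_1-s_{-1}=(p-1)\,G(\eta,\chi_1)/G(\bar\eta,\bar\chi_1)$.

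For the even case the system $s_1+s_{-1}=p^m-p^{m-1}$, $s_1-s_{-1}=0$ gives $s_1=s_{-1}=(p^m-p^{m-1})/2$ at once. For the odd case I would insert the values from Lemma~\ref{lm:gauss sum}, getting $G(\eta,\chi_1)/G(\bar\eta,\bar\chi_1)=(-1)^{m-1}(\sqrt{-1})^{\frac{(p-1)^2}{4}(m-1)}p^{\frac{m-1}{2}}$; since $m$ is odd, $(-1)^{m-1}=1$, and writing $m-1=2\cdot\frac{m-1}{2}$ turns the power of $\sqrt{-1}$ into $(-1)^{\frac{(p-1)^2}{4}\cdot\frac{m-1}{2}}=(-1)^{\frac{p-1}{2}\cdot\frac{m-1}{2}}$ because $\left(\frac{p-1}{2}\right)^2\equiv\frac{p-1}{2}\pmod 2$. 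Thus $s_1-s_{-1}=(p-1)(-1)^{\frac{p-1}{2}\frac{m-1}{2}}p^{\frac{m-1}{2}}$, and combining with $s_1+s_{-1}=(p-1)p^{m-1}$ produces precisely the asserted formulas. The only genuinely delicate step is this last simplification of the root-of-unity factor into the sign $(-1)^{\frac{p-1}{2}\frac{m-1}{2}}$; everything else is a routine unwinding of the Gauss-sum machinery. (One could instead route the argument through Lemmas~\ref{lem:ni} and~\ref{lem:nij} together with a twisted analogue of Lemma~\ref{lem:ti} for $\mathrm{Tr}(\alpha a^2)$ with $\alpha$ a nonsquare, but the direct character-sum computation above is shorter.)
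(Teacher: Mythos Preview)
Your proof is correct, but it proceeds along a different line from the paper's. The paper decomposes $s_1=n_{1,1}+n_{-1,-1}$ and $s_{-1}=n_{1,-1}+n_{-1,1}$, rewrites these as $s_{\pm 1}=\pm(n_{1,1}-n_{1,-1})+\tfrac{p^m-p^{m-1}}{2}$ using that $\#\{a:\bar\eta(\mathrm{Tr}(a^{-1}))=-1\}=\tfrac{p^m-p^{m-1}}{2}$, and then invokes Lemma~\ref{lem:nij} (which in turn rests on the counts $t_c$ of Lemma~\ref{lem:ti}) to evaluate $n_{1,1}-n_{1,-1}$. You instead compute $s_1-s_{-1}=\sum_{a}\eta(a)\bar\eta(\mathrm{Tr}(a^{-1}))$ directly as a character sum, expanding $\bar\eta$ via the Gauss-sum inversion and collapsing the inner sum to $\eta(w)G(\eta,\chi_1)$; the parity of $m$ then enters through Lemma~\ref{le:eta} applied to $\eta(w)$ on $\mathbb{F}_p^*$. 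Your route is more self-contained and bypasses Lemmas~\ref{lem:nij} and~\ref{lem:ti} entirely, at the cost of a small Gauss-sum manipulation and the sign simplification $(\sqrt{-1})^{\frac{(p-1)^2}{4}(m-1)}=(-1)^{\frac{p-1}{2}\cdot\frac{m-1}{2}}$ that you handle correctly. The paper's approach has the advantage of reusing structural counts already needed elsewhere, but yours is the shorter standalone argument; the paper's Remark after the lemma in fact gives a bijection proof only for even $m$, so your unified treatment of both parities via $s_1-s_{-1}$ is a genuine alternative.
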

\begin{proof}
Since $\bar{\eta}$ and $\mathrm{Tr}$ are balanced, we have
\begin{eqnarray*}
   &&\#\{a\in \mathbb{F}_{r}^*:\bar{\eta}(\mathrm{Tr}(a^{-1}))=-1 \} \\
       &&=\frac{1}{2}\#\{a\in \mathbb{F}_{r}^*:\mathrm{Tr}(a^{-1})\neq 0 \} \\
       &&=\frac{p^{m}-p^{m-1}}{2}.
\end{eqnarray*}
Following from the definition above, we deduce that
\begin{eqnarray*}
    s_1&=&\#\{a\in \mathbb{F}_{r}^*:\eta(a)\bar{\eta}(\mathrm{Tr}(a^{-1}))=1 \}\\
       &=&n_{1,1}+n_{-1,-1}\\
       &=&n_{1,1}+\#\{a\in \mathbb{F}_{r}^*:\bar{\eta}(\mathrm{Tr}(a^{-1}))=-1 \}-n_{1,-1}\\
       &=&n_{1,1}-n_{1,-1}+\frac{p^{m}-p^{m-1}}{2}.
\end{eqnarray*}
Similarly,
\begin{eqnarray*}
 s_{-1}&=&\#\{a\in \mathbb{F}_{r}^*:\eta(a)\bar{\eta}(\mathrm{Tr}(a^{-1}))=-1 \}~~~~~~~~~~~~~\\
          &=&n_{1,-1}-n_{1,1}+\frac{p^{m}-p^{m-1}}{2}.
\end{eqnarray*}

The asserted results then follow from Lemma~\ref{lem:nij}.\hfill\space$\qed$
\end{proof}

{\begin{remark} Lemma~\ref{lem:si} determines $s_1$ and $s_{-1}$ with the help of Lemma~\ref{lem:nij}. In the following, we give another short proof for the case of $m$ being even.

If $m$ is even, Lemma~\ref{le:eta} states that $\eta(y) = 1$ for each $y \in \mathbb{F}_{p}^*$.
And there exists a $y' \in \mathbb{F}_{p}^*$ such that $\bar{\eta}(y') = -1$. So $\eta(y')\bar{\eta}(y') = -1$.

It follows that
\begin{eqnarray*}
    s_1&=&\#\{a\in \mathbb{F}_{r}^*:\eta(a)\bar{\eta}(\mathrm{Tr}(a^{-1}))=1 \}\\
       &=&\#\{a\in \mathbb{F}_{r}^*:\eta(a)\bar{\eta}(\mathrm{Tr}(a))=1 \}\\
       &=&\#\{y'a\in \mathbb{F}_{r}^*:\eta(y'a)\bar{\eta}(\mathrm{Tr}(y'a))=1 \}\\
       &=&\#\{y'a\in \mathbb{F}_{r}^*:\eta(y')\bar{\eta}(y')\eta(a)\bar{\eta}(\mathrm{Tr}(a))=1 \}\\
       &=&\#\{a\in \mathbb{F}_{r}^*:\eta(a)\bar{\eta}(\mathrm{Tr}(a))=-1 \}= s_{-1}.
\end{eqnarray*}
Hence we have
$    s_1=   s_{-1}=\frac{p^{m}-p^{m-1}}{2} $ since
{
\begin{eqnarray*}
    s_1+s_{-1}=\#\{a\in \mathbb{F}_{r}^*:\mathrm{Tr}(a^{-1})\neq 0 \}
              =p^m-p^{m-1}.
\end{eqnarray*}
}

 \end{remark}

}

\subsection{The proof of Theorem~\ref{thcwe:CD}}\label{sec:proofthCD}

It is now sufficient to show the complete weight enumerator of $C_D$ as stated in Theorem~\ref{thcwe:CD}.
Recall that
\begin{equation*}
    C_D=\{(\mathrm{Tr}(ax^2))_{x\in D}:a\in \mathbb{F}_{r}\},
\end{equation*} where $ D=\{x\in \mathbb{F}_{r}^*:\mathrm{Tr}(x)=0\}$.

It is { obvious} that the code $C_D$ has length $n=p^{ m-1}-1$ and dimension $m$.

Observe that $a=0$ gives the zero codeword and the contribution to the complete
weight enumerator is $w_0^{n}.$ Hence, we assume that $a\in\mathbb{F}_{r}^*$ for the rest of the proof.

For each codeword $(\mathrm{Tr}(ax^2))_{x\in D}$ of $C_D$ and $\rho\in \mathbb{F}_{p}$, we should consider the number of solutions $x\in \mathbb{F}_{r}^*$ satisfying $\mathrm{Tr}(x)=0$ and $\mathrm{Tr}(ax^2)=\rho$, i.e.,
\begin{eqnarray*}
N_a(\rho)=\#\{x\in \mathbb{F}_{r}^*:\mathrm{Tr}(x)=0 ~~ \mathrm{and}~~ \mathrm{Tr}(ax^2)=\rho \}.
\end{eqnarray*}

Note that $N_a(0)=p^{m-1}-1-\sum_{\rho\in \mathbb{F}_{p}^* }N_a(\rho)$. Thus we only need to calculate
$N_a(\rho)$ with $\rho\in \mathbb{F}_{p}^*$, which is shown in Lemma~\ref{lem:Na rho}.

The desired conclusions of Theorem~\ref{thcwe:CD}
then follow from Lemmas~\ref{lem:Na rho},~\ref{lem:ni} and~\ref{lem:si}.

{Note that for even $m$ and $\rho\in \mathbb{F}_{p}^*$, the frequencies of the codewords with $N_a(\rho)=p^{m-2}+(-1)^{\frac{p-1}{2}\frac{m}{2}}p^{\frac{m-2}{2}}$ and
$N_a(\rho)=p^{m-2}-(-1)^{\frac{p-1}{2}\frac{m}{2}}p^{\frac{m-2}{2}}$ are
$\frac{p^m-1}{2p}+\frac{p-1}{p}\eta_0^{(2,r)}$ and $\frac{p^m-1}{2p}+\frac{p-1}{p}\eta_1^{(2,r)}$,
respectively. The results are given in Table~\ref{CWE:m even simplified} according to Lemma~\ref{lemN=2}.}

\section{Minimal codewords in $C_D$}\label{sec:minimal}

In this section, we will show that each codeword of $C_D$ given by \eqref{def:CD} is minimal. Hence,
we need to introduce some definitions~\cite{ding2015twothree}.

The \emph{support} of a vector $\mathbf{c}=(c_0,\cdots,c_{n-1}) \in \mathbb{F}_{p}^n$ is defined as
\begin{eqnarray*}
\{0\leq i\leq n-1:c_i\neq 0\}.
\end{eqnarray*}
We say that a vector $\mathbf{x}$ covers a vector $\mathbf{y}$ if the support of $\mathbf{x}$ contains
that of $\mathbf{y}$ as a proper subset.

A \emph{minimal codeword} of a linear code $C$ is a nonzero codeword that
{does} not cover any other nonzero codeword of $C$.

If each nonzero codeword of $C$ is minimal, then the secret sharing scheme based on the dual code $C^{\perp}$ may have nice access structure, see Theorem $12$ of~\cite{ding2015twothree}. However, it is still very hard to construct such a linear code $C$.
We list the following lemma for minimal codewords~\cite{Ashikhmin1998minimal,Ashikhmin1995minimal}.

\begin{lemma}\label{lm:minimal}
Every nonzero codeword of a linear code $C$ over $\mathbb{F}_{p}$ is minimal, provided that
\begin{eqnarray*}
\frac{w_{min}}{w_{max}}>\frac{p-1}{p},
\end{eqnarray*}
where $w_{min}$ and $w_{max}$ denote the minimum and maximum nonzero weights in $C$, respectively.
\end{lemma}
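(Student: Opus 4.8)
The plan is to argue by contradiction using a double-counting argument over the multiplicative group $\mathbb{F}_p^*$. Suppose the hypothesis $w_{min}/w_{max} > (p-1)/p$ holds, yet some nonzero codeword $\mathbf{c}$ fails to be minimal. By definition $\mathbf{c}$ then covers some nonzero codeword $\mathbf{c}'$; writing $S=\mathrm{supp}(\mathbf{c})$ and $S'=\mathrm{supp}(\mathbf{c}')$, this means $S'\subsetneq S$, so in particular $S\setminus S'\neq\emptyset$. The idea is to examine the whole family of codewords $\mathbf{c}-a\mathbf{c}'$ as $a$ ranges over $\mathbb{F}_p^*$ and to sum their Hamming weights in two different ways.

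First I would record the per-coordinate behaviour. For a coordinate $i\notin S$ we have $c_i=c'_i=0$ (here $S'\subseteq S$ is used), so the $i$-th coordinate of $\mathbf{c}-a\mathbf{c}'$ vanishes for every $a$. For $i\in S\setminus S'$ we have $c_i\neq 0$ but $c'_i=0$, so that coordinate equals $c_i\neq 0$ for every $a$; in particular each $\mathbf{c}-a\mathbf{c}'$ is nonzero and hence has weight at least $w_{min}$. Finally, for $i\in S'$ both $c_i$ and $c'_i$ are nonzero, so $c_i-a c'_i=0$ holds for exactly one scalar $a=c_i/c'_i\in\mathbb{F}_p^*$; thus this coordinate is nonzero for precisely $p-2$ of the $p-1$ admissible values of $a$.

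Summing $\mathrm{wt}(\mathbf{c}-a\mathbf{c}')$ over $a\in\mathbb{F}_p^*$ and counting the contribution of each coordinate with the bookkeeping above gives
\begin{eqnarray*}
\sum_{a\in\mathbb{F}_p^*}\mathrm{wt}(\mathbf{c}-a\mathbf{c}')
=(p-1)\,|S\setminus S'|+(p-2)\,|S'|
=(p-1)\,\mathrm{wt}(\mathbf{c})-|S'|.
\end{eqnarray*}
On the other hand, since each of the $p-1$ codewords $\mathbf{c}-a\mathbf{c}'$ is nonzero, the same sum is at least $(p-1)w_{min}$. Combining the two bounds and using $\mathrm{wt}(\mathbf{c})\le w_{max}$ together with $|S'|=\mathrm{wt}(\mathbf{c}')\ge w_{min}$ yields $(p-1)\,\mathrm{wt}(\mathbf{c})\ge (p-1)w_{min}+|S'|\ge p\,w_{min}$, whence $w_{min}/w_{max}\le w_{min}/\mathrm{wt}(\mathbf{c})\le (p-1)/p$. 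This contradicts the hypothesis, proving that every nonzero codeword is minimal.

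The argument is essentially a careful double count, so I do not expect a genuine obstacle; the only point that requires attention is the nonvanishing claim, namely that every $\mathbf{c}-a\mathbf{c}'$ is nonzero. This is exactly where the strict inclusion $S'\subsetneq S$ (rather than mere equality) in the definition of \emph{covers} is indispensable: it guarantees a coordinate in $S\setminus S'$ that survives for all $a$ and so forbids the degenerate possibility that some $\mathbf{c}-a\mathbf{c}'$ collapses to the zero codeword, which would otherwise break the lower bound $(p-1)w_{min}$.
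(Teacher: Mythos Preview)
Your argument is correct and is precisely the standard Ashikhmin--Barg double-counting proof. The paper itself does not supply a proof of this lemma; it merely quotes the statement and cites the original references \cite{Ashikhmin1998minimal,Ashikhmin1995minimal}, so there is no in-paper proof to compare against beyond noting that what you wrote is essentially the argument found in those sources.
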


For the linear codes $C_D$ of \eqref{def:CD}, if $m$ is even and $m\geq 4$, we have
\begin{eqnarray*}
\frac{w_{min}}{w_{max}}=\frac{p^{m-2}-p^{\frac{m-2}{2}}}{p^{m-2}+p^{\frac{m-2}{2}}}>\frac{p-1}{p}.
\end{eqnarray*}

The case of $m\geq 4$ being odd is proved similarly, see also~\cite{ding2015twothree}.

Using Lemma~\ref{lm:minimal}, we
{conclude} that all the nonzero codewords of $C_D$ are minimal if $m\geq 4$. Therefore, we can construct secret sharing schemes based on the dual codes $C_D^{\perp}$ with nice access structures. We omit the details here since it is similar to that of~\cite{ding2015twothree}.

\section{Concluding remarks}\label{sec:conclusion}

Inspired by~\cite{ding2015twodesign,ding2015twothree}, we constructed a family of three-weight
{linear codes.} Their complete weight enumerators and weight enumerators were presented explicitly in this paper. We also showed that every nonzero codeword of $C_D$ is minimal and thus the dual codes $C_D^{\perp}$ can be applied to construct secret sharing schemes.

\begin{acknowledgements}
The work of Zheng-An Yao is partially supported by the NSFC (Grant No.11271381), the NSFC (Grant No.11431015)
and China 973 Program (Grant No. 2011CB808000).
This work is also partially supported by the NSFC (Grant No. 61472457) and Guangdong Natural Science
Foundation (Grant No. 2014A030313161).
\end{acknowledgements}


\end{document}